\newcommand{\defterm}[1]{\textbf{#1}}
\newcommand{\eg}{\emph{e.g.}}
\newcommand{\syn}[1]{\mathsf{#1}}
\newcommand{\var}[1]{\mathit{#1}}
\newcommand{\s}[1]{\mathit{#1}}
\newcommand{\parto}{\rightharpoonup}
\newcommand{\dom}{\var{dom}}
\newcommand{\range}{\var{range}}
\newcommand{\compose}{\mathrel{\circ}}
\newcommand{\set}[1]{\left\{#1\right\}}
\newcommand{\setbuild}[2]{\left\{ #1 : #2\right\}}
\newcommand{\Pow}[1]{{\mathcal{P}\left(#1\right)}}
\newcommand{\PowSm}[1]{{\mathcal{P}(#1)}}
\newcommand{\union}{\cup}
\newcommand{\Union}{\bigcup}
\newcommand{\vect}[1]{\langle #1\rangle}
\newcommand{\vecp}[1]{\vec{#1}\;'}
\newcommand{\To}{\mathrel{\Rightarrow}}
\newcommand{\wt}{\sqsubseteq}
\newcommand{\join}{\sqcup}
\newcommand{\bigjoin}{\bigsqcup}
\DeclareMathOperator{\lfp}{lfp}
\newcommand{\QStates}{Q}
\newcommand{\FStates}{F}
\newcommand{\StackAlpha}{\Gamma}
\newcommand{\stackchar}{\gamma}
\newcommand{\sembr}[1]{\ensuremath{[\![{#1}]\!]}}
\newcommand{\opor}{\mathrel{|}}
\newcommand{\Alphabet}{A}
\newcommand{\produces}{\mathrel{::=}}
\newcommand{\vv}{v}
\newcommand{\lam}{\ensuremath{\var{lam}}}
\newcommand{\lamterm}{$\lambda$-term}
\newcommand{\lc}{$\lambda$-calculus}
\newcommand{\call}{\ensuremath{\var{call}}}
\newcommand{\lt}[2]{\lambda #1.#2}
\newcommand{\ttlp}{\mbox{\tt (}}
\newcommand{\ttrp}{\mbox{\tt )}}
\newcommand{\appform}[2]{\ttlp #1\; #2\ttrp}
\newcommand{\lamform}[2]{\ttlp \uplambda\;\ttlp#1\ttrp\;#2\ttrp}
\newcommand{\letiform}[3]{\ttlp {\tt let}\; \ttlp\ttlp#1\; #2\ttrp\ttrp\; #3\ttrp}
\newcommand{\fexpr}{f}
\newcommand{\expr}{e}
\newcommand{\aexpr}{\mbox{\sl {\ae}}}
\newcommand{\Eval}{{\mathcal{E}}}
\newcommand{\ArgEval}{{\mathcal{A}}}
\newcommand{\Inject}{{\mathcal{I}}}
\newcommand{\qstate}{q}
\newcommand{\store}{\sigma}
\newcommand{\env}{\rho}
\newcommand{\clo}{\var{clo}}
\newcommand{\cont}{\kappa}
\newcommand{\alloc}{\mathit{alloc}}
\newcommand{\addr}{a}
\newcommand{\aTo}{\leadsto}
\newcommand{\aInject}{{\hat{\mathcal{I}}}}
\newcommand{\sa}[1]{\widehat{\mathit{#1}}}
\newcommand{\aEval}{{\hat{\mathcal{E}}}}
\newcommand{\aArgEval}{{\hat{\mathcal{A}}}}
\newcommand{\astore}{{\hat{\sigma}}}
\newcommand{\aenv}{{\hat{\rho}}}
\newcommand{\aclo}{{\widehat{\var{clo}}}}
\newcommand{\acont}{{\hat{\kappa}}}
\newcommand{\aaddr}{{\hat{\addr}}}
\newcommand{\aalloc}{{\widehat{alloc}}}
\newcommand{\absmap}{\alpha}
\newcommand{\abs}[1]{|#1|}
\newcommand{\goodsingl}[1]{{#1}^{\checkmark}}
\definecolor{light-gray}{gray}{0.88}
\newcommand{\ControlStates}{Q}
\newcommand{\transfunction}{\delta}
\newcommand{\conf}{c}
\newcommand{\aconf}{{\hat c}}
\newcommand{\phrame}{\phi}
\newcommand{\aphrame}{\hat{\phi}}
\newcommand{\stackact}{g}
\DeclareMathOperator*{\PDTrans}{\longmapsto}
\newcommand{\fDSG}{\mathcal{DSG}}
\newcommand{\afPDS}{\widehat{\mathcal{PDS}}}
\newcommand{\afIPDS}{\widehat{\mathcal{IPDS}}}
\renewcommand{\Alphabet}{\Sigma}
\DeclareMathOperator*{\pdedge}{\rightarrowtail}
\newcommand{\triedge}[3]{#1 \mathrel{\pdedge^{#2}} #3}
\newcommand{\biedge}[2]{#1 \mathrel{\rightarrowtail} #2}
\newcommand{\DSStates}{S}
\newcommand{\DSEdges}{E}
\newcommand{\DSFrames}{\StackAlpha}
\newcommand{\dsframe}{\stackchar}
\newcommand{\dsstate}{s}
\newcommand{\mkDSG}{\mathcal{F}}
\newcommand{\fnet}[1]{\lfloor #1 \rfloor}
\newcommand{\fstackify}[1]{\lceil #1 \rceil}
\newcommand{\ecg}{$\epsilon$-closure graph}
\DeclareMathOperator*{\RPDTrans}{{\longmapsto\!\!\!\!\!\!\!\longrightarrow}}
\DeclareMathOperator*{\areaches}{\rightarrowtriangle}
\newcommand{\aCollect}{{\hat{G}}}
\newcommand{\mylongtitle}{Introspective Pushdown Analysis of Higher-Order Programs} 
\newcommand{\myshorttitle}{}
\newcommand{\mytitle}{\mylongtitle}
\newcommand{\mytitlebanner}{In submission}
\newcommand{\footertitle}{\myshorttitle}
\newtheorem{theorem}{Theorem}[section]
\begin{document}

\conferenceinfo{ICFP '12}{September 10--12, Copenhagen, Denmark.} 
\copyrightyear{2012} 
\copyrightdata{978-1-4503-1054-3/12/09} 

\titlebanner{\mytitlebanner}        
\preprintfooter{\footertitle}       

\title{\mytitle}


\authorinfo{Christopher Earl}
           {University of Utah}
           {cwearl@cs.utah.edu}

\authorinfo{Ilya Sergey}
           {KU Leuven}
           {ilya.sergey@cs.kuleuven.be}

\authorinfo{Matthew Might}
           {University of Utah}
           {might@cs.utah.edu}

\authorinfo{David Van Horn}
           {Northeastern University}
           {dvanhorn@ccs.neu.edu}

\maketitle

\begin{abstract}
  In the static analysis of functional programs, pushdown flow analysis and abstract garbage collection skirt just inside the boundaries of soundness and decidability.  Alone, each method  reduces analysis times and boosts precision by orders of magnitude.  This work illuminates and conquers the theoretical challenges that stand in the way of combining the power of these techniques.  The challenge in marrying these techniques is not subtle: computing the reachable control states of a pushdown system relies on limiting access during transition to the top of the stack; abstract garbage collection, on the other hand, needs full access to the entire stack to compute a root set, just as concrete collection does.  \emph{Introspective} pushdown systems resolve this conflict.  Introspective pushdown systems provide enough access to the stack to allow abstract garbage collection, but they remain restricted enough to compute control-state reachability, thereby enabling the sound and precise product of pushdown analysis and abstract garbage collection.  Experiments reveal synergistic interplay between the techniques, and the fusion demonstrates ``better-than-both-worlds'' precision.

\end{abstract}

\category{D.3.4}{Programming languages}{Processors---Optimization}
\category{F.3.2}{Logics and Meanings of Programs}{Semantics of
  Programming Languages---Program analysis, Operational semantics}

\terms
Languages, Theory

\keywords CFA2, pushdown systems, abstract interpretation, pushdown
analysis, program analysis, abstract machines, abstract garbage collection,
higher-order languages

\parskip=0.1in
\parindent=0in

\section{Introduction}

The recent development of a context-free\footnote{
 As in context-free language, not context-sensitivity.
} approach to control-flow analysis
(CFA2) by Vardoulakis and Shivers has provoked a seismic shift in the
static analysis of higher-order
programs~\cite{dvanhorn:DBLP:conf/esop/VardoulakisS10}.
Prior to CFA2, a precise analysis of recursive behavior 
had been a stumbling block---even though flow analyses have an important role to play in 
optimization for functional languages, such as 
flow-driven inlining~\cite{mattmight:Might:2006:DeltaCFA},
interprocedural constant propagation~\cite{mattmight:Shivers:1991:CFA}
and type-check
elimination~\cite{mattmight:Wright:1998:Polymorphic}.

While it had been possible to statically analyze
recursion \emph{soundly}, CFA2 made it possible to analyze recursion
\emph{precisely} by matching calls and returns without approximation.
In its pursuit of recursion,
clever engineering steered CFA2 just shy of undecidability.
The payoff is an order-of-magnitude reduction in 
analysis time and an order-of-magnitude increase
in precision.

For a visual measure of the impact, Figure~\ref{fig:diamond}
renders the abstract transition graph (a model of all possible traces through the program) for 
the toy program in Figure~\ref{fig:toy}.
For this example,
pushdown analysis
eliminates spurious return-flow from the
use of  recursion.
But, recursion is just one problem of many for flow analysis.
For instance, pushdown analysis still gets tripped up by
the 
spurious cross-flow problem;
at calls to \texttt{(id f)}
and \texttt{(id g)} in the previous example,
it thinks \texttt{(id g)} could be \texttt{f} \emph{or} \texttt{g}.

Powerful techniques such as abstract garbage collection~\cite{mattmight:Might:2006:GammaCFA}
were developed to solve the cross-flow problem.\footnote{
  The cross-flow problem arises because
  monotonicity prevents
  revoking a judgment like ``procedure ${\tt f}$ flows to {\tt x},''
  or ``procedure ${\tt g}$ flows to {\tt x},''
  once it's been made.
}
In fact, abstract garbage collection, by itself, also delivers orders-of-magnitude
improvements to analytic speed and precision.  
(See Figure~\ref{fig:diamond} again for a visualization of that impact.)

It is natural to ask: can abstract garbage collection and pushdown 
anlysis work together?
Can their strengths be multiplied?
At first glance, the answer appears to be a disheartening \emph{No}.

%


\begin{figure}
\begin{center}
\includegraphics[width=3in]{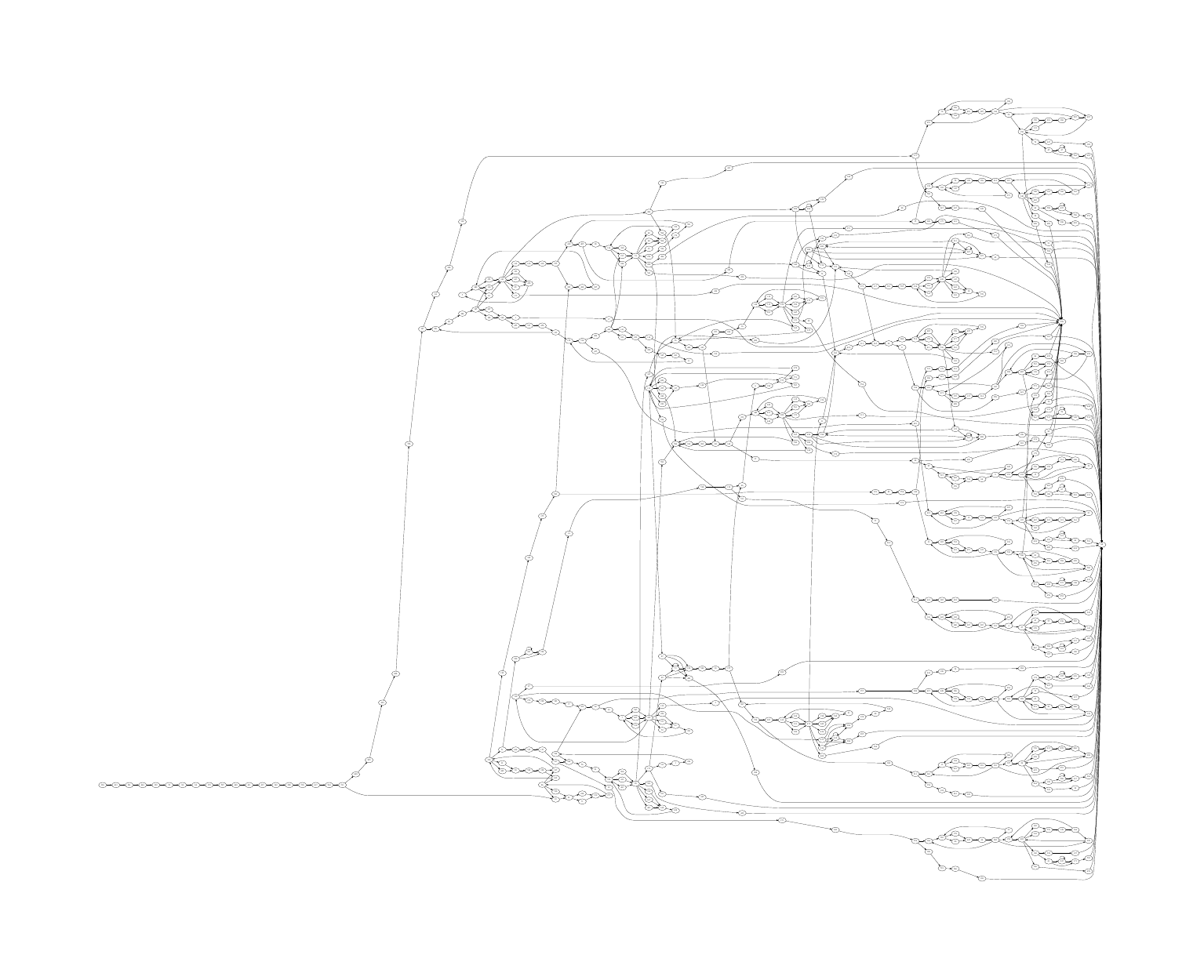}
\\
(1) without pushdown analysis or abstract GC: 653 states
\\
\includegraphics[width=3in]{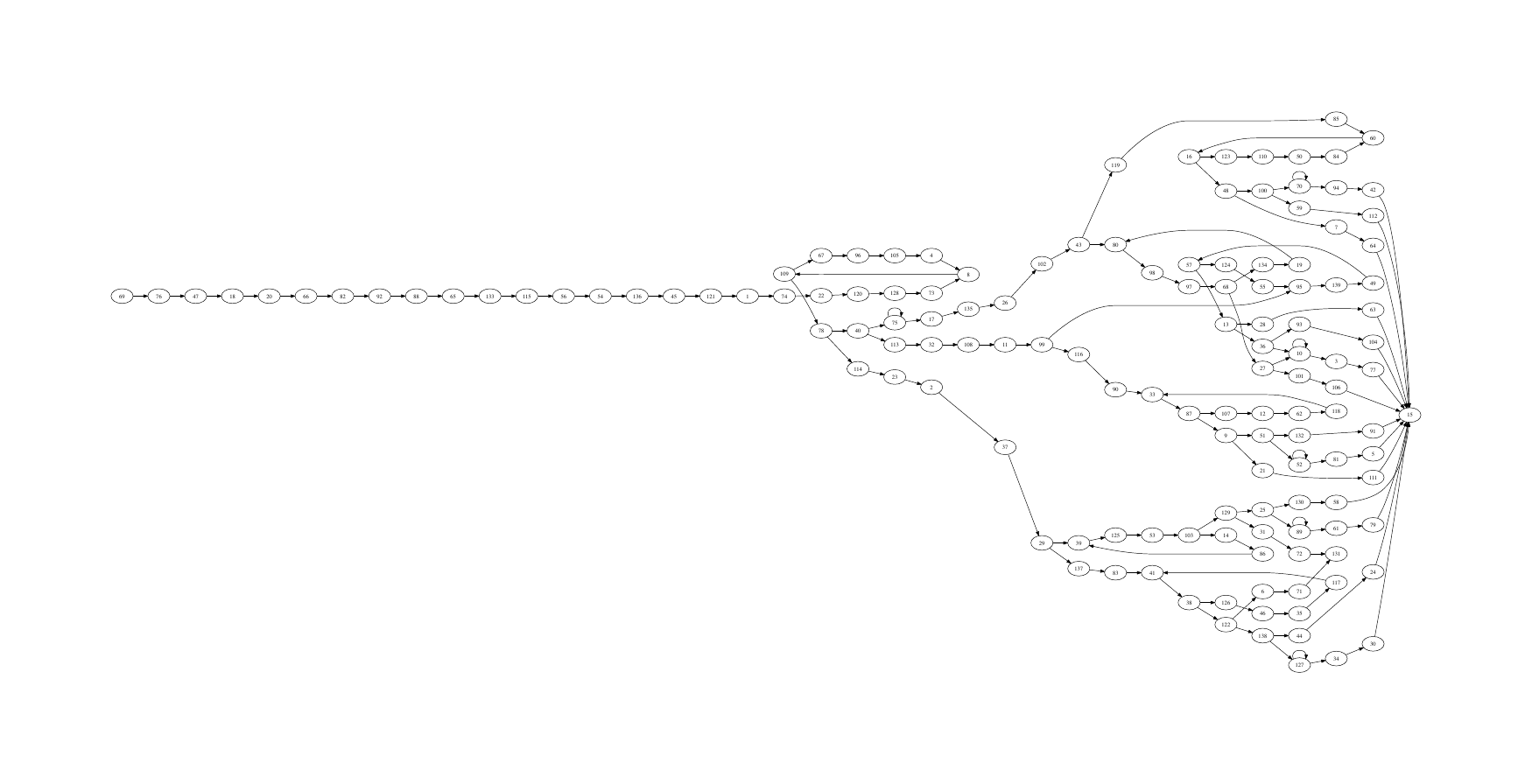}
\\
(2) with pushdown only: 139 states
\\
\includegraphics[width=3.3in]{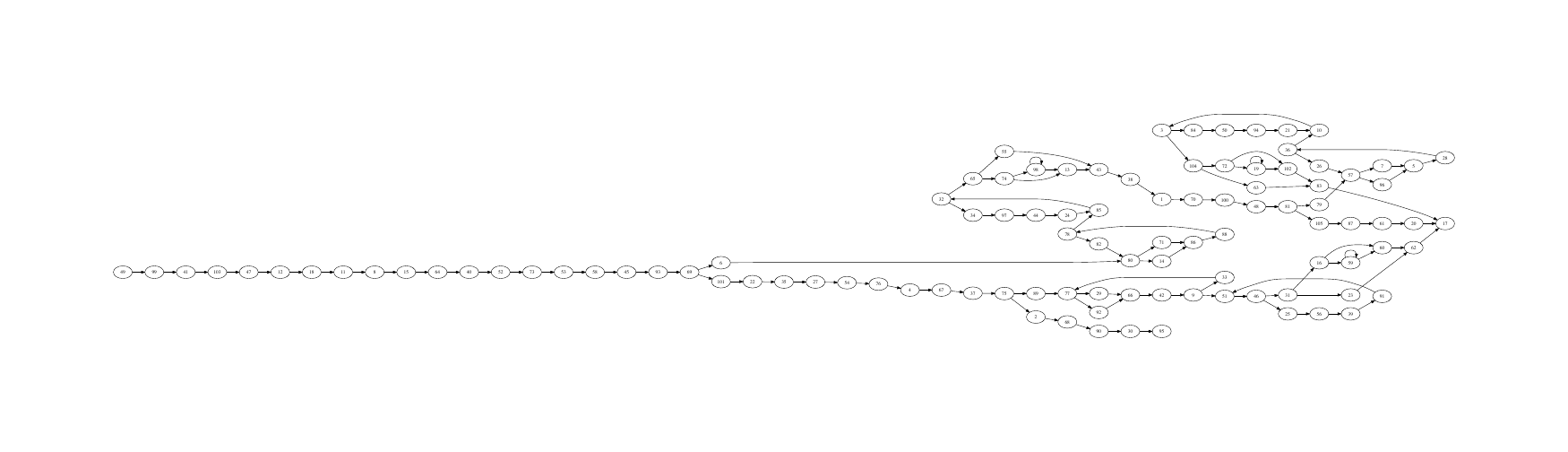}
\\
(3) with GC only: 105 states
\\
\includegraphics[width=2.5in]{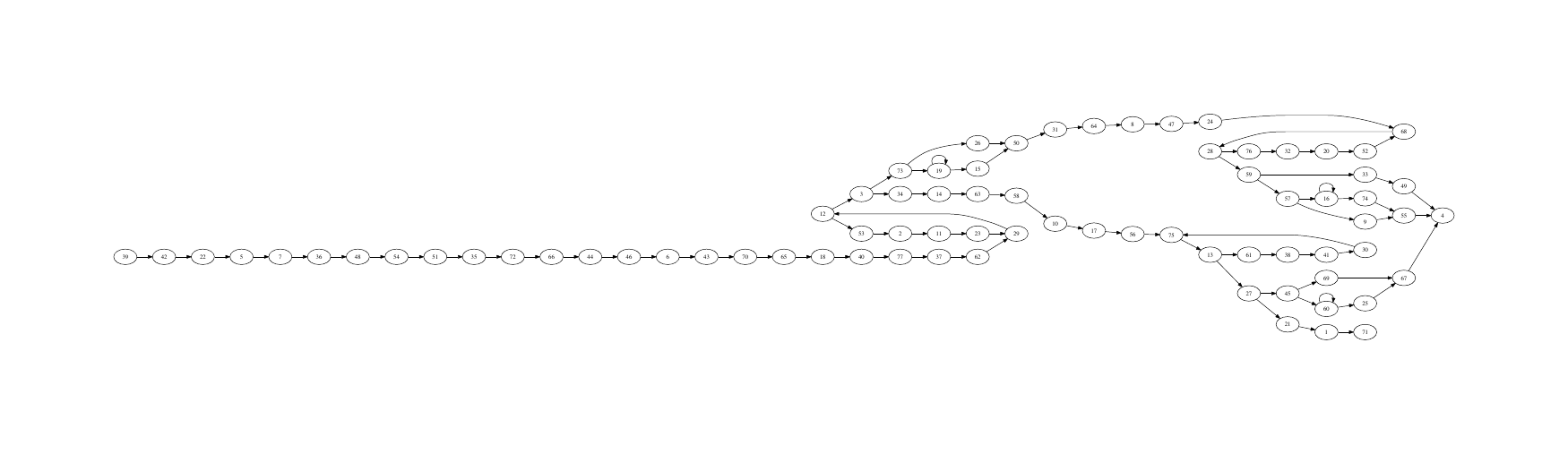}
\\
(4) with pushdown analysis and abstract GC: 77 states
\end{center}

\caption{
We generated an abstract transition graph
for the same program from Figure~\ref{fig:toy} four times: 
 (1) without pushdown analysis or abstract garbage collection;
 (2) with only abstract garbage collection;
 (3) with only pushdown analysis;
 (4) with both pushdown analysis and abstract garbage collection.
With only pushdown or abstract GC, the abstract transition graph shrinks
by an order of magnitude, but in different ways.
The pushdown-only analysis is confused by variables 
that are bound to several different higher-order functions,
but for short durations.
The abstract-GC-only is confused by non-tail-recursive loop structure.
With both techniques enabled, the graph shrinks by nearly half yet again
and fully recovers the control structure of the original program.}
\label{fig:diamond}
\end{figure}

\begin{figure}
\begin{code}
(define (id x) x)

(define (f n)
  (cond [(<= n 1)  1]
        [else      (* n (f (- n 1)))]))

(define (g n)
  (cond [(<= n 1)  1]
        [else      (+ (* n n) (g (- n 1)))]))
    
(print (+ ((id f) 3) ((id g) 4)))\end{code}
\caption{A small example to illuminate
the strengths and weaknesses of both
pushdown analysis and abstract garbage collection.}
\label{fig:toy}
\end{figure}

\subsection{The problem: The whole stack \emph{versus} just the top}

Abstract garbage collections seems to require more than pushdown analysis can
decidably provide: access to the full stack.
Abstract garbage collection, like its name implies, discards unreachable values
from an abstract store during the analysis.
Like concrete garbage collection, abstract garbage collection also begins its
sweep with a root set, and like concrete garbage collection, it must traverse
the abstract stack to compute that root set.
But, pushdown systems are restricted to viewing the top of the stack (or
a bounded depth)---a condition violated by this traversal.

Fortunately, abstract garbage collection does not need 
to arbitrarily modify the stack.
In fact, it does not even need to know the order of the frames;
it only needs the \emph{set} of frames on the stack.
We find a richer class of machine---\emph{introspective} pushdown
systems---which retains just enough restrictions to compute reachable 
control states, yet few enough to enable abstract garbage collection.

It is therefore possible to fuse the full benefits of abstract garbage
collection with pushdown analysis.
The dramatic reduction in abstract transition graph size
from the top to the bottom in Figure~\ref{fig:diamond}
(and echoed by later benchmarks) conveys the impact
of this fusion.

\paragraph{Secondary motivations}
There are three strong secondary motivations for this work:
(1) bringing context-sensitivity to pushdown analysis;
(2) exposing the context-freedom of the analysis; and
(3) enabling pushdown analysis
without continuation passing style.

In CFA2, monovariant (0CFA-like) context-sensitivity is etched directly into
the abstract semantics, which are in turn,
phrased in terms of an explicit (imperative) summarization algorithm
for a partitioned continuation-passing style.

In addition, the context-freedom of the analysis is buried implicitly
inside this algorithm.
No pushdown system or context-free grammar is explicitly identified.
A necessary precursor to our work was to make 
the pushdown system in CFA2 explicit.

A third motivation was to show that a transformation to continuation-passing
style is unnecessary for pushdown analysis.
In fact, pushdown analysis is arguably more natural
over direct-style programs.

\subsection{Overview}
We first review preliminaries to set a consistent feel for
terminology and notation, particularly with respect
to pushdown systems.
The derivation of the analysis 
begins with a concrete CESK-machine-style semantics for
A-Normal Form \lc{}.
The next step is an infinite-state abstract interpretation, constructed by
bounding the C(ontrol), E(nvironment) and S(tore) portions of the machine while
leaving the stack---the K(ontinuation)---unbounded.
A simple shift in perspective reveals that this abstract interpretation is a
rooted pushdown system.

We then introduce abstract garbage collection and quickly find
that it violates the pushdown model
with its traversals of the stack.
To prove the decidability of control-state reachability,
we formulate introspective pushdown systems, and 
recast abstract garbage collection within this framework.
We then show that control-state reachability is decidable for
introspective pushdown systems as well.

We conclude with an implementation and empirical evaluation that shows strong
synergies between pushdown analysis and abstract garbage collection, including
significant reductions in the size of the abstract state transition graph.

\subsection{Contributions}
We make the following contributions:
\begin{enumerate}
\item Our primary contribution is 
demonstrating the decidability of fusing
abstract garbage collection with pushdown flow analysis of higher-order
programs.
Proof comes in the form of a fixed-point solution for computing the reachable
control-states of an introspective pushdown system
and an embedding of abstract garbage collection
as an introspective pushdown system.

\item We show that classical notions of context-sensitivity, such as $k$-CFA
and poly/CFA, have direct generalizations in a pushdown setting:
monovariance\footnote{Monovariance refers to an abstraction that groups all bindings to the same variable together: there is \emph{one} abstract variant for all bindings to each variable.} is \emph{not} an essential restriction,
as in CFA2.

\item We make the context-free aspect of CFA2 explicit: we clearly define and
identify the pushdown system.
We do so by starting with a classical CESK machine and systematically
abstracting until a pushdown system emerges.
We also remove the orthogonal frame-local-bindings aspect of CFA2, so as to
directly solely on the pushdown nature of the analysis.

\item We remove the requirement for CPS-conversion
by synthesizing the analysis directly for direct-style (in
    the form of A-normal form lambda-calculus).

\item We empirically validate claims of improved
precision on a suite of benchmarks.
We find synergies between 
pushdown analysis 
and
abstract garbage collection 
that makes the whole greater that the sum of its parts.

\end{enumerate}

\section{Pushdown preliminaries}

The literature contains  many equivalent definitions of pushdown machines, so
we adapt our own definitions from Sipser~\cite{mattmight:Sipser:2005:Theory}.
\emph{Readers familiar with pushdown theory may wish to skip ahead.}

\subsection{Syntactic sugar}

When a triple $(x,\ell,x')$ is an edge in a labeled graph:
\begin{equation*}
x \pdedge^\ell x'  \equiv
(x,\ell,x')
\text.
\end{equation*}
Similarly, when a pair $(x,x')$ is a graph edge:
\begin{equation*}
\biedge{x}{x'} \equiv (x,x')
\text.
\end{equation*}
We use both
string and vector notation for sequences:
\begin{equation*}
a_1 a_2 \ldots a_n \equiv \vect{a_1,a_2,\ldots,a_n}
\equiv
\vec{a} 
\text.
\end{equation*}

\subsection{Stack actions, stack change and stack manipulation}

Stacks are sequences over a stack alphabet $\StackAlpha$.
To reason about stack manipulation concisely,
we first turn stack alphabets into ``stack-action'' sets;
each character represents a change to the stack: push, pop or no
change.

For each character $\stackchar$ in a stack alphabet $\StackAlpha$, the
\defterm{stack-action} set $\StackAlpha_\pm$ contains a push character
$\stackchar_{+}$; a pop character $\stackchar_{-}$;
and a no-stack-change indicator, $\epsilon$:
\begin{align*}
\stackact \in \StackAlpha_\pm &\produces \epsilon && \text{[stack unchanged]} 
\\
    &\;\;\opor\;\; \stackchar_{+}  \;\;\;\text{ for each } \stackchar \in \StackAlpha && \text{[pushed $\stackchar$]}
    \\
      &\;\;\opor\;\; \stackchar_{-}  \;\;\;\text{ for each } \stackchar \in \StackAlpha && \text{[popped $\stackchar$]}
      \text.
      \end{align*}
      In this paper, the symbol $\stackact$ represents some stack action.

      When we develop introspective pushdown systems, we are going
      to need formalisms for easily manipulating stack-action strings
      and stacks.
      Given a string of stack actions, we can compact it into a minimal
      string describing net stack change.
      We do so through the operator $\fnet{\cdot} : \StackAlpha_\pm^* \to
      \StackAlpha_\pm^*$, which cancels out opposing adjacent push-pop stack
      actions:
      \begin{align*}
      \fnet{\vec{\stackact} \; \stackchar_+\stackchar_- \; \vecp{\stackact}} &= 
      \fnet{\vec{\stackact} \; \vecp{\stackact}} 
      &
      \fnet{\vec{\stackact} \; \epsilon \; \vecp{\stackact}} &= 
      \fnet{\vec{\stackact} \; \vecp{\stackact}} 
      \text,
      \end{align*}
      so that
      $\fnet{\vec{\stackact}} = \vec{\stackact}\text,$
      if there are no cancellations to be made in the string $\vec{\stackact}$.

      We can convert a net string back into a stack by stripping off the
      push symbols with the stackify operator, $\fstackify{\cdot} :
      \StackAlpha^{*}_\pm \parto \StackAlpha^*$:
      \begin{align*}
      \fstackify{\stackchar_+ \stackchar_+' \ldots \stackchar_+^{(n)}} =
      \vect{\stackchar^{(n)}, \ldots, \stackchar', \stackchar}
      \text,
      \end{align*}
      and for convenience, $[\vec{\stackact}] = 
      \fstackify{\fnet{\vec{\stackact}}}$.
      Notice the stackify operator is defined for strings containing
      only push actions. 

        %
        %
        %

        \subsection{Pushdown systems}
        A \defterm{pushdown system} is a triple
        $M = (\ControlStates,\StackAlpha,\transfunction)$ where:
        \begin{enumerate}

        \item $\ControlStates$ is a finite set of control states;

        \item $\StackAlpha$ is a stack alphabet; and

        \item $\transfunction \subseteq
        \ControlStates \times \StackAlpha_\pm \times \ControlStates$ is a transition relation.
        \end{enumerate}
        The set $\ControlStates \times \StackAlpha^*$ is 
        called the \defterm{configuration-space} of this pushdown system.
        We use $\mathbb{PDS}$ to denote the class of all pushdown systems.
        \\

        \noindent
        For the following definitions, let $M = (\ControlStates,\StackAlpha,\transfunction)$.
        \begin{itemize}


        \item The labeled \defterm{transition relation} $(\PDTrans_{M}) \subseteq
        (\ControlStates \times \StackAlpha^*) \times 
        \StackAlpha_\pm \times 
        (\ControlStates \times \StackAlpha^*)$
        determines whether one configuration may transition to another while performing the given stack action:
        \begin{align*}
(\qstate, \vec{\stackchar}) 
  \mathrel{\PDTrans_M^\epsilon}
  (\qstate',\vec{\stackchar}) 
  & \text{ iff }
  \qstate \pdedge^\epsilon \qstate'
  \in \transfunction
  && \text{[no change]}
  \\
    (\qstate, \stackchar : \vec{\stackchar}) 
    \mathrel{\PDTrans_M^{\stackchar_{-}}}
    (\qstate',\vec{\stackchar})
    & \text{ iff }
    \qstate \pdedge^{\stackchar_{-}} \qstate'
    \in \transfunction
    && \text{[pop]}
    \\
      (\qstate, \vec{\stackchar}) 
      \mathrel{\PDTrans_{M}^{\stackchar_{+}}}
      (\qstate',\stackchar : \vec{\stackchar}) 
      & \text{ iff }
      \qstate \pdedge^{\stackchar_{+}} \qstate'
      \in \transfunction
      && \text{[push]}
      \text.
      \end{align*}

      \item If unlabelled, the transition relation $(\PDTrans)$ checks whether \emph{any} stack action can enable the transition:
      \begin{align*}
      \conf \mathrel{\PDTrans_{M}} \conf' \text{ iff }
      \conf \mathrel{\PDTrans_{M}^{\stackact}} \conf' \text{ for some stack action } \stackact 
      \text.
      \end{align*}

      \item

      For a string of stack actions $\stackact_1 \ldots
      \stackact_n$:
      \begin{equation*}
      \conf_0 \mathrel{\PDTrans_M^{\stackact_1\ldots\stackact_n}} \conf_n
      \text{ iff }
      \conf_0
      \mathrel{\PDTrans_M^{\stackact_1}} 
      \conf_1
      \mathrel{\PDTrans_M^{\stackact_2}} 
      \cdots
      \mathrel{\PDTrans_M^{\stackact_{n-1}}} 
      \conf_{n-1} 
      \mathrel{\PDTrans_M^{\stackact_n}} 
      \conf_n\text,
      \end{equation*}
      for some configurations $\conf_0,\ldots,\conf_n$.

      \item

      For the transitive closure:
      \begin{equation*}
      \conf \mathrel{\PDTrans_M^{*} \conf'
        \text{ iff }
        \conf \mathrel{\PDTrans_M^{\vec{\stackact}}} \conf'
          \text{ for some action string }
        \vec{\stackact}} 
        \text.
        \end{equation*}

        \end{itemize}

        %

        %

        \subsection{Rooted pushdown systems}

        A \defterm{rooted pushdown system} is a quadruple 
        $(\ControlStates,\StackAlpha,\transfunction,\qstate_0)$ in which 
        $(\ControlStates,\StackAlpha,\transfunction)$ is a pushdown system and
        $\qstate_0 \in \ControlStates$ is an initial (root) state.
        $\mathbb{RPDS}$  is the class of all rooted pushdown
        systems.

        For a rooted pushdown system $M =
        (\ControlStates,\StackAlpha,\transfunction,\qstate_0)$, we define 
        the \defterm{reachable-from-root transition relation}:
        \begin{equation*}
        \conf \RPDTrans_M^{\stackact} \conf' \text{ iff } 
(\qstate_0,\vect{})
  \mathrel{\PDTrans_M^*} 
  \conf 
  \text{ and }
  \conf 
  \mathrel{\PDTrans_M^\stackact}
  \conf'
  \text.
  \end{equation*}
  In other words, the root-reachable transition relation also makes
  sure that the root control state can actually reach the transition.

  We overload the root-reachable transition relation to operate on
  control states:
  \begin{equation*}
  \qstate 
  \mathrel{\RPDTrans_M^\stackact}
  \qstate' \text{ iff }
(\qstate,\vec{\stackchar}) 
  \mathrel{\RPDTrans_M^\stackact}
  (\qstate',\vecp{\stackchar}) 
  \text{ for some stacks }
  \vec{\stackchar},
  \vecp{\stackchar}
  \text.
  \end{equation*}
  For both root-reachable relations, if we elide the stack-action label,
  then, as in the un-rooted case, the transition holds if \emph{there
    exists} some stack action that enables the transition:
    \begin{align*}
    \qstate 
    \mathrel{\RPDTrans_M}
    \qstate' \text{ iff }
    \qstate 
    \mathrel{\RPDTrans_M^{\stackact}}
    \qstate' 
    \text{ for some action } \stackact
    \text.
    \end{align*}

    \subsection{Computing reachability in pushdown systems}

    A pushdown flow analysis 
    can be construed as
    computing the \emph{root-reachable}
    subset of control states in a rooted pushdown system, 
    $M = (\ControlStates,\StackAlpha,\transfunction,\qstate_0)$:
    \begin{align*}
    \setbuild{\qstate}{
      \qstate_0 
        \mathrel{\RPDTrans_M}
      \qstate
    }
\end{align*}
Reps~\emph{et. al} and many others provide a straightforward ``summarization'' algorithm
to compute this set~\cite{mattmight:Bouajjani:1997:PDA-Reachability,mattmight:Kodumal:2004:CFL,mattmight:Reps:1998:CFL,mattmight:Reps:2005:Weighted-PDA}.
Our preliminary report also offers a reachability algorithm 
tailored to higher-order
programs~\cite{mattmight:Earl:2010:Pushdown}.

\subsection{Nondeterministic finite automata}
In this work, we will need a finite description of 
all possible stacks at a given control state within
a rooted pushdown system.
We will exploit the fact that the set of stacks
at a given control point is a regular language.
Specifically, we will extract a nondeterministic finite automaton
accepting that language from the structure
of a rooted pushdown system.
A \defterm{nondeterministic finite automaton} (NFA) is a quintuple
$M = (\QStates, \Alphabet, \transfunction, \qstate_0, \FStates)$:
\begin{itemize}
\item $\ControlStates$ is a finite set of control states;

\item $\Alphabet$ is an input alphabet; 

\item $\transfunction 
  \subseteq
\ControlStates \times (\Alphabet \union \set{\epsilon}) 
  \times \ControlStates$ 
  is a transition relation.

  \item $\qstate_0$ is a distinguished start state.

  \item $\FStates \subseteq \QStates$ is a set of accepting states.
  \end{itemize}
  We denote the class of all NFAs as $\mathbb{NFA}$.

                  \section{Setting: A-Normal Form 
                    $\lambda$-calculus}
                    \label{sec:anf}

                    Since our goal is analysis of
                    \emph{higher-order languages}, we operate on the
                    \lc{}.
                    To simplify presentation of the concrete and abstract semantics, we choose 
                    A-Normal Form \lc{}.
                    (This is a strictly cosmetic
                     choice: all of our results can be replayed \emph{mutatis mutandis} in
                     the standard direct-style setting as well.)
                    ANF enforces an order of evaluation and it requires
                    that all arguments to a function be atomic:
                    \begin{align*}
                    \expr \in \syn{Exp} &\produces \letiform{\vv}{\call}{\expr} && \text{[non-tail call]}
                    \\
                      &\;\;\opor\;\; \call && \text{[tail call]}
                      \\
                        &\;\;\opor\;\; \aexpr && \text{[return]}
                        \\
                          \fexpr,\aexpr \in \syn{Atom} &\produces \vv \opor \lam && \text{[atomic expressions]}
                          \\
                            \lam \in \syn{Lam} &\produces \lamform{\vv}{\expr} && \text{[lambda terms]}
                            \\
                              \call \in \syn{Call} &\produces \appform{\fexpr}{\aexpr} && \text{[applications]}
                              \\
                                \vv \in \syn{Var} &\text{ is a set of identifiers} && \text{[variables]}
                                \text.
                                \end{align*}

                                We use the CESK machine of Felleisen and
                                Friedman~\cite{mattmight:Felleisen:1987:CESK} to specify a small-step semantics
                                for ANF.
                                The CESK machine has an explicit stack, and under a structural abstraction, the stack
                                component of this machine directly becomes the stack component of a
                                pushdown system.
                                The set of configurations ($\s{Conf}$) for 
                                this machine has the four expected components (Figure~\ref{fig:cesk}).
                                \begin{figure}
                                \begin{align*}
                                \conf \in \s{Conf} &= \syn{Exp} \times \s{Env} \times \s{Store} \times \s{Kont} && \text{[configurations]}
                                \\
                                  \env \in \s{Env} &= \syn{Var} \parto \s{Addr} && \text{[environments]}
                                  \\
                                    \store \in \s{Store} &= \s{Addr} \to \s{Clo} && \text{[stores]}
                                      \\
                                        \clo \in \s{Clo} &= \syn{Lam} \times \s{Env} && \text{[closures]}
                                        \\
                                          \cont \in \s{Kont} &= \s{Frame}^* && \text{[continuations]}
                                          \\
                                            \phrame \in \s{Frame} &= \syn{Var} \times \syn{Exp} \times \s{Env}  && \text{[stack frames]}
                                            \\
                                              \addr \in \s{Addr} &\text{ is an infinite set of addresses} && \text{[addresses]}
                                              \text.
                                              \end{align*}
                                              \caption{The concrete configuration-space.}
                                              \label{fig:cesk}
                                              \end{figure}

                                              \subsection{Semantics}

                                              To define the semantics, we need five items:
                                              \begin{enumerate}
                                              \item $\Inject : \syn{Exp} \to \s{Conf}$ injects an expression into
                                              a configuration:
                                              \begin{equation*}
\conf_0 = \Inject(\expr) = (\expr, [], [], \vect{})
  \text.
  \end{equation*}

  \item $\ArgEval : \syn{Atom} \times \s{Env} \times \s{Store} \parto
  \s{Clo}$ evaluates atomic expressions:
  \begin{align*}
  \ArgEval(\lam,\env,\store) &= (\lam,\env) && \text{[closure creation]}
  \\
    \ArgEval(\vv,\env,\store) &= \store(\env(\vv)) && \text{[variable look-up]}
    \text.
    \end{align*}

    \item $(\To) \subseteq \s{Conf} \times \s{Conf}$ transitions between
configurations. (Defined below.)

  \item $\Eval : \syn{Exp} \to \Pow{\s{Conf}}$ computes the set of
  reachable machine configurations for a given program:
  \begin{equation*}
  \Eval(\expr) = \setbuild{ \conf }{ \Inject(\expr) \To^* \conf } 
  \text.
  \end{equation*}

  \item $\alloc : \syn{Var} \times \s{Conf} \to \s{Addr}$ 
  chooses fresh store addresses for newly bound variables.
  The address-allocation function is an opaque parameter in this
  semantics,
  so that the forthcoming abstract semantics may also
  parameterize allocation.
  This parameterization provides the knob to tune the
  polyvariance and context-sensitivity of the resulting analysis.
  For the sake of defining the concrete semantics, letting addresses be
  natural numbers suffices, and then the allocator can choose the lowest
  unused address:
  \begin{align*}
  \s{Addr} &= \mathbb{N}
  \\
  \alloc(v,(\expr,\env,\store,\cont)) &= 
1 + \max(\dom(\store))
  \text.
  \end{align*}

  \end{enumerate}




  \paragraph{Transition relation}
  To define the transition $\conf \To \conf'$, we need three rules.
  The first rule handle tail calls by evaluating the function into a
  closure, evaluating the argument into a value and then moving to the
  body of the closure's \lamterm{}:
  \begin{align*}
  \overbrace{(\sembr{\appform{\fexpr}{\aexpr}}, \env, \store, \cont)}^{\conf}
  &\To
  \overbrace{(\expr,\env'',\store',\cont)}^{\conf'}
  \text{, where }
  \\
    (\sembr{\lamform{\vv}{\expr}}, \env') &= \ArgEval(\fexpr,\env,\store)
    \\
      \addr &= \alloc(\vv,\conf)
      \\
        \env'' &= \env'[\vv \mapsto \addr]
        \\
          \store' &= \store[\addr \mapsto \ArgEval(\aexpr,\env,\store)]
          \text.
          \end{align*}

          \noindent
          Non-tail call pushes a frame onto the stack and evaluates the call:
                    \begin{align*}
                    \overbrace{(\sembr{\letiform{\vv}{\call}{\expr}}, \env, \store, \cont)}^{\conf}
                    &\To
                    \overbrace{(\call,\env,\store, (\vv,\expr,\env) : \cont)}^{\conf'}
                    \text.
                    \end{align*}

                    \noindent
                    Function return pops a stack frame:
                    \begin{align*}
                    \overbrace{(\aexpr, \env, \store, (\vv,\expr,\env') : \cont)}^{\conf}
                    &\To
                    \overbrace{(\expr,\env'',\store', \cont)}^{\conf'}
                    \text{, where }
                    \\
                      \addr &= \alloc(\vv,\conf)
                      \\
                        \env'' &= \env'[\vv \mapsto \addr]
                        \\
                          \store' &= \store[\addr \mapsto \ArgEval(\aexpr,\env,\store)]
                          \text.  
                          \end{align*}

                          \section{Pushdown abstract interpretation}
                          \label{sec:abstraction}
                          Our first step toward a static analysis 
                          is an abstract interpretation
                          into an \emph{infinite} state-space.
                          To achieve a pushdown analysis,
                          we simply
                          abstract away less than we normally would.
                          Specifically, 
                          we leave the stack height unbounded.

                          Figure~\ref{fig:abs-conf-space}
                          details the 
                          abstract configuration-space.
                          To synthesize it, we force addresses to be a finite set, but
                          crucially, we leave the stack untouched.
                          When we compact the set of addresses into a finite set, the machine
                          may run out of addresses to allocate, and when it does, the
                          pigeon-hole principle will force multiple closures to reside at the
                          same address.
                          As a result, we have no choice but to force the range of the store to
                          become a power set in the abstract configuration-space.
                          The abstract transition relation has  components
                          analogous to those from the concrete semantics:

                          \begin{figure}
                          \begin{align*}
                          \aconf \in \sa{Conf} &= \syn{Exp} \times \sa{Env} \times \sa{Store} \times \sa{Kont} && \text{[configurations]}
                          \\
                            \aenv \in \sa{Env} &= \syn{Var} \parto \sa{Addr} && \text{[environments]}
                            \\
                              \astore \in \sa{Store} &= \sa{Addr} \to \Pow{\sa{Clo}} && \text{[stores]}
                                \\
                                  \aclo \in \sa{Clo} &= \syn{Lam} \times \sa{Env} && \text{[closures]}
                                  \\
                                    \acont \in \sa{Kont} &= \sa{Frame}^* && \text{[continuations]}
                                    \\
                                      \aphrame \in \sa{Frame} &= \syn{Var} \times \syn{Exp} \times \sa{Env}  && \text{[stack frames]}
                                      \\
                                        \aaddr \in \sa{Addr} &\text{ is a \emph{finite} set of addresses} && \text{[addresses]}
                                        \text.
                                        \end{align*}
                                        \caption{The abstract configuration-space.}
                                        \label{fig:abs-conf-space}
                                        \end{figure}

                                        \paragraph{Program injection}
                                        The abstract injection function $\aInject : \syn{Exp} \to \sa{Conf}$
                                        pairs an expression with an empty environment, an empty store and an
                                        empty stack to create the initial abstract configuration:
                                        \begin{equation*}
\aconf_0 = \aInject(\expr) = (\expr, [], [], \vect{})
  \text.
  \end{equation*}

  \paragraph{Atomic expression evaluation}
  The abstract atomic expression evaluator, $\aArgEval : \syn{Atom}
  \times \sa{Env} \times \sa{Store} \to \PowSm{\sa{Clo}}$, returns the value of
  an atomic expression in the context of an environment and a store;
  it returns a \emph{set} of abstract closures:
  \begin{align*}
  \aArgEval(\lam,\aenv,\astore) &= \set{(\lam,\aenv)} && \text{[closure creation]}
  \\
    \aArgEval(\vv,\aenv,\astore) &= \astore(\aenv(\vv)) && \text{[variable look-up]}
    \text.
    \end{align*}

    \paragraph{Reachable configurations}
    The abstract program evaluator $\aEval : \syn{Exp} \to
    \PowSm{\sa{Conf}}$ returns all of the configurations reachable from
    the initial configuration:
    \begin{equation*}
    \aEval(\expr) = \setbuild{ \aconf }{ \aInject(\expr) \aTo^* \aconf } 
    \text.
    \end{equation*}
    Because there are an infinite number of abstract configurations, a
    na\"ive implementation of this function may not terminate.

    \paragraph{Transition relation}
    The abstract transition relation $(\aTo) \subseteq \sa{Conf} \times
    \sa{Conf}$ has three rules, one of which has become nondeterministic.
    A tail call may fork because there could be multiple abstract closures
    that it is invoking:
    \begin{align*}
    \overbrace{(\sembr{\appform{\fexpr}{\aexpr}}, \aenv, \astore, \acont)}^{\aconf}
    &\aTo
    \overbrace{(\expr,\aenv'',\astore',\acont)}^{\aconf'}
    \text{, where }
    \\
      (\sembr{\lamform{\vv}{\expr}}, \aenv') &\in \aArgEval(\fexpr,\aenv,\astore)
      \\
        \aaddr &= \aalloc(\vv,\aconf)
        \\
          \aenv'' &= \aenv'[\vv \mapsto \aaddr]
          \\
            \astore' &= \astore \join [\aaddr \mapsto \aArgEval(\aexpr,\aenv,\astore)]
            \text.
            \end{align*}
            We define all of the partial orders shortly, but for stores:
            \begin{equation*}
            (\astore \join \astore')(\aaddr) = \astore(\aaddr) \union \astore'(\aaddr)
            \text.
            \end{equation*}

            \noindent
            A non-tail call pushes a frame onto the stack and evaluates the call:
            \begin{align*}
            \overbrace{(\sembr{\letiform{\vv}{\call}{\expr}}, \aenv, \astore, \acont)}^{\aconf}
            &\aTo
            \overbrace{(\call,\aenv,\astore, (\vv,\expr,\aenv) : \acont)}^{\aconf'} 
            \text.
            \end{align*}

            \noindent
            A function return pops a stack frame:
            \begin{align*}
            \overbrace{(\aexpr, \aenv, \astore, (\vv,\expr,\aenv') : \acont)}^{\aconf}
            &\aTo
            \overbrace{(\expr,\aenv'',\astore', \acont)}^{\aconf'}
            \text{, where }
            \\
              \aaddr &= \aalloc(\vv,\aconf)
              \\
                \aenv'' &= \aenv'[\vv \mapsto \aaddr]
                \\
                  \astore' &= \astore \join [\aaddr \mapsto \aArgEval(\aexpr,\aenv,\astore)]
                  \text.  
                  \end{align*}

                  \paragraph{Allocation: Polyvariance and context-sensitivity}
                  \label{sec:polyvariance}
                  In the abstract semantics, the abstract allocation function
                  $\aalloc : \syn{Var} \times \sa{Conf} \to \sa{Addr}$ determines the
                  polyvariance of the analysis.
                  In a control-flow analysis, \emph{polyvariance} literally refers to
                  the number of abstract addresses (variants) there are for each
                  variable.
                  An advantage of this framework over CFA2 is
                  that varying this abstract allocation function
                  instantiates pushdown versions of classical flow analyses.
                  All of the following allocation approaches can be used with the
                  abstract semantics.
                  The abstract allocation function is a
                  parameter to the analysis.

                  \paragraph{Monovariance: Pushdown 0CFA}

                  Pushdown 0CFA uses variables themselves for abstract addresses:

                  \begin{align*}
                  \sa{Addr} &= \syn{Var}
                  \\
                    \alloc(v,\aconf) &= v
                    \text.
                    \end{align*}

                    \paragraph{Context-sensitive: Pushdown 1CFA}

                    Pushdown 1CFA pairs the variable with the current expression to get
                    an abstract address:

                    \begin{align*}
                    \sa{Addr} &= \syn{Var} \times \syn{Exp}
                    \\
                      \alloc(\vv,(\expr,\aenv,\astore,\acont)) &= (\vv,\expr)
                      \text.
                      \end{align*}

                      \paragraph{Polymorphic splitting: Pushdown poly/CFA}

                      Assuming we compiled the program from a programming language with
                      let-bound polymorphism and marked which functions were let-bound, we
                      can enable polymorphic splitting:

                      \begin{align*}
                      \sa{Addr} &= \syn{Var} + \syn{Var} \times \syn{Exp} 
                      \\
                        \alloc(\vv,(\sembr{\appform{\fexpr}{\aexpr}},\aenv,\astore,\acont)) &= 
                        \begin{cases}
                        (\vv,\sembr{\appform{\fexpr}{\aexpr}}) & \fexpr \text{ is let-bound}
                        \\
                          \vv & \text{otherwise}
                          \text.
                          \end{cases}
                          \end{align*}

                          \paragraph{Pushdown $k$-CFA}

                          For pushdown $k$-CFA, we need to look beyond the current state
                          and at the last $k$ states.
                          By concatenating the expressions in the last $k$ states together, and
                          pairing this sequence with a variable we get pushdown $k$-CFA:
                          \begin{align*}
                          \sa{Addr} &= \syn{Var} \times \syn{Exp}^k
                          \\
                            \aalloc(\vv,\vect{(\expr_1,\aenv_1,\astore_1,\acont_1),\ldots}) &=
(\vv,\vect{\expr_1,\ldots,\expr_k})
  \text.
  \end{align*}

  \subsection{Partial orders}

  For each set $\hat X$ inside the abstract configuration-space, we use the
  natural partial order, $(\wt_{\hat X}) \subseteq \hat X \times \hat X$.
  Abstract addresses and syntactic sets have flat partial orders.
  For the other sets, the
  partial order lifts:
  \begin{itemize}
  \item point-wise over environments:
  \begin{equation*}
  \aenv \wt \aenv' \text{ iff }
  \aenv(\vv) = \aenv'(\vv) \text{ for all } \vv \in \dom(\aenv)
  \text;
  \end{equation*}

  \item component-wise over closures:
  \begin{equation*}
  (\lam,\aenv) \wt (\lam,\aenv') \text{ iff } \aenv \wt \aenv'
  \text;
  \end{equation*}

  \item point-wise over stores:
  \begin{equation*}
  \astore \wt \astore' 
  \text{ iff }
  \astore(\aaddr) \wt \astore'(\aaddr) \text{ for all } \aaddr
\in \dom(\astore)
  \text;
  \end{equation*}

  \item component-wise over frames:
  \begin{equation*}
  (\vv,\expr,\aenv) \wt (\vv,\expr,\aenv') \text{ iff } \aenv \wt \aenv'
  \text;
  \end{equation*}

  \item element-wise over continuations:
  \begin{equation*}
  \vect{\aphrame_1,\ldots,\aphrame_n} 
  \wt 
  \vect{\aphrame'_1,\ldots,\aphrame'_n} 
  \text{ iff }
  \aphrame_i \wt \aphrame'_i
  \text{; and }
  \end{equation*}

  \item component-wise across configurations:
  \begin{equation*}
(\expr,\aenv,\astore,\acont) 
  \wt
  (\expr,\aenv',\astore',\acont')
  \text{ iff }
  \aenv \wt \aenv'
  \text{ and }
  \astore \wt \astore'
  \text{ and }
  \acont \wt \acont'
  \text.
  \end{equation*}
  \end{itemize}


  \subsection{Soundness}

  To prove soundness,  an abstraction map $\absmap$ connects the
  concrete and abstract configuration-spaces:
  \begin{align*}
\absmap(\expr,\env,\store,\cont) &= (\expr,\absmap(\env),\absmap(\store),\absmap(\cont))
  \\
    \absmap(\env) &= \lt{\vv}{\absmap(\env(\vv))}
    \\
      \absmap(\store) &= \lt{\aaddr}{\!\!\! \bigjoin_{\absmap(\addr) = \aaddr} \!\!\! \set{\absmap(\store(\addr))}}
      \\
        \absmap\vect{\phrame_1,\ldots,\phrame_n} &= \vect{\absmap(\phrame_1), \ldots, \absmap(\phrame_n)}
        \\
          \absmap(\vv,\expr,\env) &= (\vv,\expr,\absmap(\env))
          \\
            \absmap(\addr) &\text{ is determined by the allocation functions}
            \text.
            \end{align*}
            It is then easy to prove that the abstract transition relation
            simulates the concrete transition relation:
            \begin{theorem}
If:
\begin{equation*}
\absmap(\conf) \wt \aconf \text{ and } \conf \To \conf'
\text,
  \end{equation*}
  then there must exist $\aconf' \in \sa{Conf}$ such that:
  \begin{equation*}
  \absmap(\conf') \wt \aconf' \text{ and } \aconf \aTo \aconf'
  \text.
  \end{equation*}  
  \end{theorem}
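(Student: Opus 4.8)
The plan is to prove this simulation lemma by a routine case analysis on the derivation of $\conf \To \conf'$, which must be one of the three rules --- tail call, non-tail call, or return. In every case the hypothesis $\absmap(\conf) \wt \aconf$ immediately pins down the shape of $\aconf$: since $\wt$ on configurations demands \emph{equal} $\syn{Exp}$-components, $\aconf$ carries exactly the same expression as $\conf$, so the same-shaped abstract rule is enabled; moreover $\absmap(\cont) \wt \acont$ forces $\acont$ to be a stack of the same height with $\absmap(\phrame_i) \wt \aphrame_i$ frame-by-frame, so in the return case we can read off that $\aconf$'s top frame has the form $(\vv,\expr,\aenv')$ with $\absmap(\env') \wt \aenv'$. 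In each case I take $\aconf'$ to be the result of running the matching abstract rule on $\aconf$; all that then remains is to verify $\absmap(\conf') \wt \aconf'$ componentwise (expression, environment, store, continuation).

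The monotonicity arguments I would factor into three small lemmas. (i) \emph{Atomic-evaluation soundness:} if $\absmap(\env) \wt \aenv$ and $\absmap(\store) \wt \astore$ then for every atom $\aexpr$ there is some $\aclo \in \aArgEval(\aexpr,\aenv,\astore)$ with $\absmap(\ArgEval(\aexpr,\env,\store)) \wt \aclo$, and when $\aexpr$ is a $\lamterm$ this $\aclo$ carries the same $\lamterm$ (so the abstract tail-call rule can fork to the corresponding closure). The $\lamterm$ case is immediate; the variable case unfolds the definition of $\absmap$ on stores --- $\absmap(\store)(\absmap(\addr)) = \bigjoin_{\absmap(\addr')=\absmap(\addr)} \set{\absmap(\store(\addr'))}$ contains $\absmap(\store(\addr))$ --- and then applies $\absmap(\store) \wt \astore$. (ii) \emph{Environment update:} $\absmap(\env[\vv\mapsto\addr]) = \absmap(\env)[\vv\mapsto\absmap(\addr)]$, hence if $\absmap(\addr) = \aaddr$ then it is $\wt \aenv[\vv\mapsto\aaddr]$. (iii) \emph{Store update:} $\absmap(\store[\addr\mapsto d]) \wt \absmap(\store) \join [\absmap(\addr) \mapsto \set{\absmap(d)}]$ --- the single strong update on the left becomes, under $\absmap$, at worst a joining (weak) update at $\absmap(\addr)$ --- which combined with (i) and $\absmap(\store) \wt \astore$ gives $\absmap(\store[\addr \mapsto \ArgEval(\aexpr,\env,\store)]) \wt \astore \join [\aaddr \mapsto \aArgEval(\aexpr,\aenv,\astore)]$, exactly the abstract $\astore'$.

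Given these lemmas the three cases are mechanical. For a non-tail call the store and environment are untouched, and we merely push $\absmap(\vv,\expr,\env) = (\vv,\expr,\absmap(\env)) \wt (\vv,\expr,\aenv)$ onto both stacks, preserving the framewise order and the equal height. For a tail call we additionally use lemma (i) to choose the abstract closure the abstract rule forks to, and then lemmas (ii) and (iii) to conclude $\absmap(\env'') \wt \aenv''$ and $\absmap(\store') \wt \astore'$; the continuation is unchanged. For a return we first extract the top abstract frame $(\vv,\expr,\aenv')$ as noted above and then argue exactly as in the tail-call case, with the shorter stack $\acont$ as the new continuation.

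The one genuinely delicate point --- and the step I expect to be the crux --- is \emph{allocation}. The abstract rule computes $\aaddr = \aalloc(\vv,\aconf)$ from the over-approximation $\aconf$, not from $\absmap(\conf)$, yet lemmas (ii) and (iii) need $\absmap(\alloc(\vv,\conf)) = \aaddr$ to connect the two sides. This forces a compatibility side-condition relating $\alloc$, $\aalloc$, and $\absmap$ on addresses: whenever $\absmap(\conf) \wt \aconf$ we require $\absmap(\alloc(\vv,\conf)) = \aalloc(\vv,\aconf)$. I would discharge this once per instantiation in Section~\ref{sec:polyvariance}: every allocator listed there reads only the $\syn{Exp}$-components of the current (or last $k$) states, which $\wt$ preserves exactly, and $\absmap$ on addresses is \emph{defined} precisely so that this equation holds, making the side-condition automatic for 0CFA, 1CFA, poly/CFA and $k$-CFA. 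Everything else in the proof is bookkeeping about the lifted partial orders on the product configuration-space.
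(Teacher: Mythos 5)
Your proposal is correct and follows essentially the same route as the paper: the paper's proof is a one-line appeal to case-wise analysis on the expression type, citing the standard $k$-CFA simulation argument, and your three-case analysis with the atomic-evaluation, update, and allocation-compatibility lemmas is exactly that argument spelled out (your allocation side-condition is what the paper encodes by stipulating that $\absmap$ on addresses ``is determined by the allocation functions'').
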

  \begin{proof}
  The proof follows by case-wise analysis on the type of the
  expression in the configuration.
  It is a straightforward adaptation of similar proofs, such as that of
  \cite{mattmight:Might:2007:Dissertation} for $k$-CFA.
  \end{proof}

  \section{The shift: From abstract CESK
    to rooted PDS}
    \label{sec:pda}
    In the previous section, we constructed an infinite-state abstract
    interpretation of the CESK machine.
    The infinite-state nature of the abstraction makes it difficult to see how
    to answer static analysis questions.
    Consider, for instance, a control flow-question:
    \begin{quote}
    At the call site $\appform{\fexpr}{\aexpr}$, may a closure over
    $\lam$ be called?
    \end{quote}
    If the abstracted CESK machine were a finite-state machine, an
    algorithm could answer this question by enumerating all reachable
    configurations and looking for an abstract configuration
    $(\sembr{\appform{\fexpr}{\aexpr}},\aenv,\astore,\acont)$ in which
    $(\lam,\_) \in \aArgEval(\fexpr,\aenv,\astore)$.
    However, because the abstracted CESK machine may contain an infinite
    number of reachable configurations, enumeration is not an option.

    Fortunately, a shift in perspective reveals the abstracted CESK machine to be a
    rooted pushdown system.
    This shift permits the use of a control-state reachability
    algorithm in place of exhaustive search of the
    configuration-space.
    In this shift, a control-state is an expression-environment-store triple, and a
    stack character is a frame.
    Figure~\ref{fig:acesk-to-pds} defines the program-to-RPDS conversion function
    $\afPDS : \syn{Exp} \to \mathbb{RPDS}$.  

    \begin{figure}
    \begin{align*}
\afPDS(\expr) &= (\QStates,\StackAlpha,\transfunction,\qstate_0)
  \text{, where }
  \\
    \QStates &= \syn{Exp} \times \sa{Env} \times \sa{Store}
    \\
      \StackAlpha &= \sa{Frame}
      \\
        (\qstate,\epsilon,\qstate') \in \transfunction
        & \text{ iff }
(\qstate, \acont)
  \aTo
  (\qstate', \acont)
  \text{ for all } \acont
  \\
    (\qstate,\aphrame_{-},\qstate') \in \transfunction
    & \text{ iff }
(\qstate, \aphrame : \acont)
  \aTo
  (\qstate',\acont)
  \text{ for all } \acont
  \\
    (\qstate,\aphrame'_{+},\qstate') \in \transfunction
    & \text{ iff }
(\qstate, \acont)
  \aTo
  (\qstate',\aphrame' : \acont)
  \text{ for all } \acont
  \\
    (\qstate_0,\vect{}) &= \aInject(\expr)
    \text.
    \end{align*}
    \caption{$\afPDS : \syn{Exp} \to \mathbb{RPDS}$.
    }
\label{fig:acesk-to-pds}
\end{figure}

At this point, we can compute the root-reachable control states using a
straightforward summarization algorithm~\cite{mattmight:Bouajjani:1997:PDA-Reachability,mattmight:Reps:1998:CFL,mattmight:Reps:2005:Weighted-PDA}.
This is the essence of CFA2.

\section{Introspection for abstract garbage collection}
Abstract garbage collection~\cite{mattmight:Might:2006:GammaCFA} yields large
improvements in precision by using the abstract interpretation of garbage
collection to make more efficient use of the finite address space available
during analysis.
Because of the way abstract garbage collection operates, it grants exact
precision to the flow analysis of variables whose bindings die
between invocations of the same abstract context.
Because pushdown analysis grants exact precision in tracking return-flow, it is
clearly advantageous to combine these techniques.
Unfortunately, as we shall demonstrate, abstract garbage collection
breaks the pushdown model by requiring full stack inspection to discover the
root set.

Abstract garbage collection modifies the transition relation
to conduct a ``stop-and-copy'' garbage collection before each
transition.
To do this, we define a garbage collection function 
$\aCollect : \sa{Conf} \to \sa{Conf}$
on
configurations:
\begin{align*}
\aCollect(\overbrace{\expr,\aenv,\astore,\acont}^{\aconf})
&= (\expr,\aenv,\astore|\mathit{Reachable}(\aconf),\acont)
  \text,
  \end{align*}
  where the pipe operation $f|S$ yields the function $f$, but with
  inputs not in the set $S$ mapped to bottom---the empty set.
  The reachability function $\mathit{Reachable} : \sa{Conf} \to \PowSm{\sa{Addr}}$
  first computes the root set, and then the transitive closure of an
  address-to-address adjacency relation: 
  \begin{align*}
  \mathit{Reachable}(\overbrace{\expr,\aenv,\astore,\acont}^\aconf) &=
\setbuild{ \aaddr }{ \aaddr_0 \in \mathit{Root}(\aconf)
  \text{ and }
  \aaddr_0  
    \mathrel{\areaches_\astore^*}
  \aaddr
}
\text,
  \end{align*}
  where the function $\mathit{Root} : \sa{Conf} \to 
  \PowSm{\sa{Addr}}$ 
  finds the root addresses:
  \begin{align*} 
  \mathit{Root}(\expr,\aenv,\astore,\acont) &=
  \mathit{range}(\aenv) \union
\mathit{StackRoot}(\acont)
  \text,
  \end{align*}
  and the $\mathit{StackRoot} : \sa{Kont} \to \PowSm{\sa{Addr}}$ function
  finds roots down the stack:
  \begin{align*} 
  \mathit{StackRoot}
  \vect{(\vv_1,\expr_1,\aenv_1),\ldots,
    (\vv_n,\expr_n,\aenv_n)}
  &= 
\Union_i \range(\aenv_i)
  \text,
  \end{align*}
  and the relation
  $(\areaches) \subseteq \sa{Addr} \times \sa{Store} \times \sa{Addr}$
  connects adjacent addresses:
  \begin{align*}
  \aaddr 
  \mathrel{\areaches_\astore} 
  \aaddr'
  \text{ iff there exists }
(\lam,\aenv) \in \astore(\aaddr)
  \text{ such that }
  \aaddr' \in \range(\aenv)
  \text.
  \end{align*}

  The new abstract transition relation is thus the composition of abstract garbage collection with the old transition relation:
  \begin{align*}
  (\aTo_{\mathrm{GC}}) = 
  (\aTo) \compose \aCollect
  \end{align*}

  \paragraph{Problem: Stack traversal violates pushdown constraint}

  In the formulation of pushdown systems, the transition relation is restricted
  to looking at the top frame, and even in less restricted formulations,
  at most a bounded number of frames can be inspected.
  Thus, the relation $(\aTo_{\mathrm{GC}})$ cannot be computed as
  a straightforward pushdown analysis using summarization.

  \paragraph{Solution: Introspective pushdown systems}
  To accomodate the richer structure of the relation $(\aTo_{\mathrm{GC}})$, we
  now define \emph{introspective} pushdown systems.
  Once defined, we can embed the garbage-collecting abstract interpretation
  within this framework, and then focus on developing a control-state
  reachability algorithm for these systems.

  An \defterm{introspective pushdown system} is a quadruple
  $M = (\ControlStates,\StackAlpha,\transfunction,\qstate_0)$:
  \begin{enumerate}

  \item $\ControlStates$ is a finite set of control states;

  \item $\StackAlpha$ is a stack alphabet; 

  \item $\transfunction \subseteq \ControlStates \times \StackAlpha^*
  \times \StackAlpha_\pm \times \ControlStates$ is a transition relation; and

  \item $\qstate_0$ is a distinguished root control state. 
  \end{enumerate}
  The second component in the transition relation is 
  a realizable stack at the given control-state.
  This realizable stack distinguishes an introspective pushdown system
  from a general pushdown system.
  $\mathbb{IPDS}$  denotes the class of all introspective pushdown
  systems.

  Determining how (or if) a control state $\qstate$
  transitions to a control state $\qstate'$, requires knowing a
  path taken to the state $\qstate$.
  Thus, we need to define reachability inductively.
  When $M = (\ControlStates,\StackAlpha,\transfunction,\qstate_0)$,
  transition from the initial control state considers only
  empty stacks:
  \begin{align*}
  \qstate_0  
  \mathrel{\RPDTrans_M^\stackact}
  \qstate
  \text{ iff }
(\qstate_0, \vect{}, \stackact, \qstate)
  \in \transfunction
  \text.
  \end{align*}
  For non-root states, the paths to that state matter,
  since they determine the stacks realizable with that state:
  \begin{align*}
  \qstate 
  \mathrel{\RPDTrans_M^\stackact}
  \qstate' 
  &\text { iff there exists }
  \vec{\stackact}
  \text{ such that }
  \qstate_0
  \mathrel{\RPDTrans_M^{\vec{\stackact}}}
  \qstate
  \text{ and }
  (\qstate, [\vec{\stackact}], \stackact, \qstate' )
  \in
  \transfunction,
  \\
    &  \text{ where }\qstate
    \mathrel{\RPDTrans_M^{\vect{\stackact_1,\ldots,\stackact_n}}}
    \qstate'
    \text{ iff }
    \qstate
    \mathrel{\RPDTrans_M^{\stackact_1}}
    \qstate_1
    \mathrel{\RPDTrans_M^{\stackact_2}}
    \cdots
    \mathrel{\RPDTrans_M^{\stackact_n}}
    \qstate'
    \text.
    \end{align*}

    \subsection{Garbage collection in introspective pushdown systems}

    To convert the garbage-collecting,
    abstracted CESK machine into an introspective pushdown system,
    we use the function $\afIPDS : \syn{Exp} \to \mathbb{IPDS}$:
    \begin{align*}
\afIPDS(\expr) &= (\QStates,\StackAlpha,\transfunction,\qstate_0)
  \\
    \QStates &= \syn{Exp} \times \sa{Env} \times \sa{Store}
    \\
      \StackAlpha &= \sa{Frame}
      \\
        (\qstate,\acont,\epsilon,\qstate') 
        \in \transfunction
        & \text{ iff }
\aCollect(\qstate, \acont)
  \aTo
  (\qstate', \acont)
  \\
    (\qstate,\aphrame : \acont,\aphrame_{-},\qstate')
    \in \transfunction
    & \text{ iff }
\aCollect(\qstate, \aphrame : \acont) 
  \aTo
  (\qstate',\acont)
  \\
    (\qstate,\acont,\aphrame_{+},\qstate') 
    \in \transfunction
    & \text{ iff }
\aCollect(\qstate, \acont)
  \aTo
  (\qstate',\aphrame : \acont)
  \\
    (\qstate_0,\vect{}) &= \aInject(\expr)
    \text.
    \end{align*}

    \section{Introspective reachability via Dyck state graphs}
    \label{sec:pdreachability}

    Having defined introspective pushdown systems and embedded
    our abstract, garbage-collecting semantics within them, we
    are ready to define control-state reachability for IDPSs.

    We cast our reachability algorithm for
    introspective pushdown systems 
    as
    finding a fixed-point, in
    which we incrementally accrete the reachable control states
    into a ``Dyck state graph.''

    A \defterm{Dyck state graph} is a quadruple
    $G = (\DSStates,\DSFrames,\DSEdges,\dsstate_0)$, in which:
    \begin{enumerate}
    \item $\DSStates$ is a finite set of nodes;
    \item $\DSFrames$ is a set of frames;
    \item $\DSEdges \subseteq \DSStates \times \DSFrames_\pm \times
    \DSStates$ is a set of stack-action edges; and
    \item $\dsstate_0$ is an initial state;
    \end{enumerate}
    such that for any node $\dsstate \in \DSStates$, it must be the case that:
    \begin{equation*}
(\dsstate_0,\vect{}) \mathrel{\PDTrans_G^*} (\dsstate,\vec{\dsframe})
  \text{ for some stack } \vec{\dsframe}
  \text.
  \end{equation*}
  In other words, a Dyck state graph is equivalent to a rooted
  pushdown system in which there is a legal path to every control state
  from the initial control state.\footnote{We chose the term \emph{Dyck
    state graph} because the sequences of stack actions along valid paths
      through the graph correspond to substrings in Dyck languages.
      A \defterm{Dyck language} is a language of balanced, ``colored''
      parentheses.
      In this case, each character in the stack alphabet is a color.}
      We use $\mathbb{DSG}$ to denote the class of Dyck state graphs.
(Clearly, $\mathbb{DSG} \subset \mathbb{RPDS}$.)

  Our goal is to compile an 
  implicitly-defined introspective pushdown system into 
  an explicited-constructed Dyck state graph.
  During this transformation, the per-state path
  considerations
  of an introspective pushdown are ``baked into''
  the Dyck state graph.
  We can formalize this compilation process as a map, 
  $\fDSG : \mathbb{IPDS} \to
  \mathbb{DSG}\text.$

  Given an introspective pushdown system $M =
  (\ControlStates,\StackAlpha,\transfunction,\qstate_0)$, its equivalent Dyck state
  graph is $\fDSG(M) = (\DSStates,\DSFrames,\DSEdges,\qstate_0)$, where 
  $\dsstate_0 = \qstate_0$,
  the set $\DSStates$ contains reachable nodes:
  \begin{equation*}
  \DSStates = 
  \setbuild{ \qstate }{ 
      \qstate_0
      \mathrel{\RPDTrans_M^{\vec{\stackact}}}
      \qstate
      \text{ for some stack-action sequence }
      \vec{\stackact}
      }
\text,
  \end{equation*}
  and the set $\DSEdges$ contains reachable edges:
  \begin{equation*}
  \DSEdges = \setbuild{ \qstate \pdedge^{\stackact} \qstate' }{
      \qstate
      \mathrel{\RPDTrans_M^{\stackact}}
    \qstate'
      %
  }
\text.
\end{equation*}
Our goal is to find a method
for computing a Dyck state graph from 
an introspective pushdown
system.

%
%

\subsection{Compiling to Dyck state graphs}
\label{sec:ipds-to-dsg}
We now turn our attention to compiling an introspective pushdown
system (defined implicitly) into a Dyck state graph (defined
    explicitly).
That is, we want an implementation of the function $\fDSG$.
To do so, we first phrase the Dyck state graph construction as the
least fixed point of a monotonic function.
This formulation provides a straightforward iterative method for computing
the function $\fDSG$.
%

The function $\mkDSG : \mathbb{IPDS} \to (\mathbb{DSG} \to
    \mathbb{DSG})$ generates the monotonic iteration function we need:
\begin{align*}
\mkDSG(M) &= f\text{, where }
\\
    M &= (\QStates,\StackAlpha,\transfunction,\qstate_0)
    \\
      f(\DSStates,\DSFrames,\DSEdges,\dsstate_0) &= (\DSStates',\DSFrames,\DSEdges',\dsstate_0) \text{, where }
      \\
        \DSStates' &= \DSStates \union \setbuild{ \dsstate' }{ 
          \dsstate \in \DSStates 
            \text{ and }  
          \dsstate 
            \mathrel{\RPDTrans_M}
          \dsstate'
        }
\union \set{\dsstate_0}
\\
    \DSEdges' &= \DSEdges \union \setbuild{ \dsstate \pdedge^\stackact \dsstate' }{ 
      \dsstate \in \DSStates 
        \text{ and }  
      \dsstate 
        \mathrel{\RPDTrans_M^\stackact}
      \dsstate'    
    }
\text.
\end{align*}
Given an introspective pushdown system $M$, each application of the function
$\mkDSG(M)$ accretes new edges at the frontier of the Dyck state
graph.

\subsection{Computing a round of $\mathcal{F}$}
\label{sec:stacks-nfa}
The formalism obscures an important detail in the
computation of an iteration: 
the transition relation $(\RPDTrans)$ for the introspective
pushdown system must compute all possible stacks
in determining whether or not there exists a transition.
Fortunately, this is not as onerous as it seems: the set of all possible stacks 
for any given control-point is a regular language,
    and the finite automaton that encodes this language
    can be lifted (or read off) the structure of the Dyck state graph.
    The function $\mathit{Stacks} : \mathbb{DSG} \to \DSStates \to \mathbb{NFA}$
    performs exactly this extraction:
    \begin{align*}
    \mathit{Stacks}
    (
     \overbrace{\DSStates, \StackAlpha, \DSEdges, \dsstate_0}^{M}
    )
(\dsstate)
  &= 
  (\DSStates, \StackAlpha, \transfunction, \dsstate_0, 
   \set{\dsstate} )\text{, where }
  \\
    (\dsstate',\stackchar,\dsstate'')
    \in \transfunction
    & \text{ if }
    (\dsstate', \stackchar_+, \dsstate'') \in \DSEdges
    \\
      (\dsstate',\epsilon,\dsstate'')
      \in \transfunction
      & \text{ if }
      \dsstate' \RPDTrans_M^{\vec{\stackact}}
      \dsstate''
      \text{ and }
      [\vec{\stackact}] = \epsilon
      \text.
      \end{align*}


      \subsection{Correctness}
      Once the algorithm reaches a fixed point, the Dyck state graph is complete:
      \begin{theorem} 
      $\fDSG(M) = \lfp(\mkDSG(M))$.
      \end{theorem}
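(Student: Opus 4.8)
The plan is to prove the two inclusions $\fDSG(M) \wt \lfp(\mkDSG(M))$ and $\lfp(\mkDSG(M)) \wt \fDSG(M)$ separately, where the ordering on Dyck state graphs is componentwise subset inclusion on nodes and edges. Since $\mkDSG(M)$ is visibly monotonic in the sense of only adding nodes and edges (each application takes the union of the current node/edge sets with some extra sets), and since the ambient lattice of candidate graphs is bounded --- nodes lie in the finite set $\ControlStates$, edges in the finite set $\ControlStates \times \StackAlpha_\pm \times \ControlStates$ --- the least fixed point exists and is reached after finitely many iterations. I would record this finiteness observation first, because it is also what makes the algorithm an algorithm.

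For the inclusion $\lfp(\mkDSG(M)) \wt \fDSG(M)$, I would argue by fixed-point induction: show that $\fDSG(M)$ is itself a pre-fixed-point of $\mkDSG(M)$, i.e. $\mkDSG(M)(\fDSG(M)) \wt \fDSG(M)$. Unfolding the definition of $\mkDSG(M)$ on $\fDSG(M) = (\DSStates,\DSFrames,\DSEdges,\qstate_0)$, I must check that any $\dsstate'$ with $\dsstate \in \DSStates$ and $\dsstate \RPDTrans_M \dsstate'$ is already in $\DSStates$, and similarly for edges. This is immediate from the definitions of $\DSStates$ and $\DSEdges$ in $\fDSG(M)$ together with transitivity of the reachable-from-root relation: if $\qstate_0 \RPDTrans_M^{\vec{\stackact}} \dsstate$ and $\dsstate \RPDTrans_M^{\stackact} \dsstate'$, then $\qstate_0 \RPDTrans_M^{\vec{\stackact}\,\stackact} \dsstate'$, so $\dsstate' \in \DSStates$ and the edge is in $\DSEdges$. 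Since the least fixed point is below every pre-fixed-point, this direction follows.

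For the inclusion $\fDSG(M) \wt \lfp(\mkDSG(M))$, I would show that every reachable node and edge of $M$ appears in the fixed point, by induction on the length $n$ of the stack-action sequence $\vec{\stackact}$ witnessing $\qstate_0 \RPDTrans_M^{\vec{\stackact}} \qstate$. The base case $n = 0$ is handled because $\mkDSG(M)$ always re-adds $\dsstate_0 = \qstate_0$. For the inductive step, if $\qstate_0 \RPDTrans_M^{\vec{\stackact}\,\stackact} \qstate'$ via an intermediate $\qstate$ reached by $\vec{\stackact}$, then by induction $\qstate$ is in the fixed point $\aCollect$-graph, and because that graph \emph{is} a fixed point, applying $\mkDSG(M)$ to it leaves it unchanged --- hence the one-step successor $\qstate'$ and the edge $\qstate \pdedge^\stackact \qstate'$ must already be present, since otherwise $f$ would add them. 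The subtle point here, and the one I expect to be the main obstacle, is that the introspective transition relation $\RPDTrans_M$ depends on \emph{all} stacks realizable at $\qstate$, and in an arbitrary partially-built graph the set of realizable stacks (computed via $\mathit{Stacks}$) may be a strict subset of the true set; so a successor edge that is genuinely enabled in $M$ might not yet be detected. The resolution is that this gap only matters \emph{before} the fixed point; at the fixed point the graph is closed under $\PDTrans_G^*$-reachability, so $\mathit{Stacks}(G)(\qstate)$ accepts exactly the stacks realizable at $\qstate$ in $M$, and the introspective condition $(\qstate,[\vec{\stackact}],\stackact,\qstate') \in \transfunction$ is evaluated against the correct stack set. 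Making this "the fixed point sees all the right stacks" argument precise --- i.e. that $\mathit{Stacks}(\lfp(\mkDSG(M)))(\qstate)$ recognizes $\set{ [\vec{\stackact}] : \qstate_0 \RPDTrans_M^{\vec{\stackact}} \qstate }$ --- is the technical heart of the proof, and I would isolate it as a lemma about how $\epsilon$-cycles and push-edges in the completed Dyck state graph reconstruct exactly the Dyck-language stack contents.
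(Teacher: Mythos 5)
Your proposal is correct and follows essentially the same route as the paper's own proof: one inclusion by observing that $\fDSG(M)$ is closed under $\mkDSG(M)$ (so the finitely many iterates from the bottom, and hence the least fixed point, stay below it), the other by walking along a witnessing action sequence from the root and using the fixed-point property to force each successive state and edge to be present---which is the paper's ``first missing edge'' contradiction recast as induction on the length of $\vec{\stackact}$. The only point of divergence is your flagged ``technical heart'': no lemma about $\mathit{Stacks}$ is needed for this theorem, because the defining condition of $\mkDSG(M)$ is stated directly in terms of the relation $\mathrel{\RPDTrans_M}$ of the introspective system $M$ itself (which already quantifies over all realizable paths in $M$), not in terms of stacks read off the partially built graph; the under-approximation worry you raise pertains to the NFA-based implementation of a round in Section~\ref{sec:stacks-nfa}, whose correctness is a separate concern from the fixed-point characterization proved here.
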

      \begin{proof}
      Let $M = (\QStates,\StackAlpha,\transfunction,\qstate_0)$.
      Let $f = \mkDSG(M)$.
      Observe that $\lfp(f) = f^n(\emptyset,\StackAlpha,\emptyset,\qstate_0)$ for some $n$.
      When $N \subseteq M$, then it easy to show that $f(N) \subseteq M$.
      Hence, $\fDSG(M) \supseteq \lfp(\mkDSG(M))$.

      To show $\fDSG(M) \subseteq \lfp(\mkDSG(M))$, suppose this is not
      the case.
      Then, there must be at least one edge in $\fDSG(M)$ that is not in
      $\lfp(\mkDSG(M))$.
      By the defintion of $\fDSG(M)$, each edge must be part of a sequence of edges from the initial
      state.
      Let $(\dsstate,\stackact,\dsstate')$ be the first edge in its sequence from the initial state
      that is not in $\lfp(\mkDSG(M)$.
          Because the proceeding edge is in $\lfp(\mkDSG(M))$, the state $\dsstate$ \emph{is} in 
          $\lfp(\mkDSG(M))$.
          Let $m$ be the lowest natural number such that $\dsstate$ appears in $f^m(M)$.
          By the definition of $f$, this edge must appear in $f^{m+1}(M)$, which means it must also appear in 
          $\lfp(\mkDSG(M))$, which is a contradiction.
          Hence, $\fDSG(M) \subseteq \lfp(\mkDSG(M))$.
          \end{proof}

          \subsection{Complexity}

          While decidability is the goal, it is straightforward
          to determine the complexity of this na\"ive fixed-point
          method.
      To determine the complexity of this algorithm, we ask two questions:
      how many times would the algorithm invoke the iteration function in
      the worst case, and how much does each invocation cost in the
      worst case?
      The size of the final Dyck state
      graph bounds the run-time of the algorithm.
      Suppose the final Dyck state graph has $m$ states.
      In the worst case, the iteration function adds only a single edge each
      time.
      Between any two states, there is one $\epsilon$-edge, one push edge, or some
      number of pop edges (at most $\abs{\StackAlpha}$).
      Since there are at most $\abs{\StackAlpha}m^2$ edges in the final graph, the maximum
      number of iterations is $\abs{\StackAlpha}m^2$.

      The cost of computing each iteration is harder to bound.
      The cost of determining whether to add a push edge is constant, as is
      the cost of adding an $\epsilon$-edge.
      So the cost of determining all new push edges and new $\epsilon$-edges
      to add is constant.
      %
      Determining whether or not to add a pop edge is expensive.
      To add the pop edge
      $\triedge{\dsstate}{\stackchar_-}{\dsstate'}$, we must prove that
      there exists a configuration-path to the control state $\dsstate$, in which
      the character $\stackchar$ is on the top of the stack.
      This reduces to a CFL-reachability query~\cite{mattmight:Melski:2000:CFL}
      at each node, the cost of which is $O(\abs{\StackAlpha_\pm}^3
          m^3)$~\cite{mattmight:Kodumal:2004:CFL}.

      To summarize, in terms of the number of reachable control states, the
      complexity of this naive algorithm is:
  \[
O((\abs{\StackAlpha}m^2) 
    \times
    (\abs{\StackAlpha_\pm}^3m^3)) 
  =  O(\abs{\StackAlpha}^4m^5)\text.
  \]
  (As with summarization, it is possible
   to maintain a work-list and 
   introduce an 
   \ecg{} to avoid spurious recomputation.
   %
   %
   This ultimately reduces complexity to
   $O(\abs{\StackAlpha}^2m^4)$.)


  %

  \section{Implementation and evaluation}

  We have developed an implementation to produce the Dyck state graph of an
  introspective pushdown system. 
  While the fixed-point computation~\ref{sec:stacks-nfa} could be rendered
  directly as functional code, extending the classical summarization-based algorithm 
  for pushdown reachability to introspective pushdown systems yields better performance.
  In this section we present a variant of such an algorithm and discuss results
  from an implementation that can analyze a large subset of the Scheme programming
  language.

  \subsection{Iterating over a DSG: An implementor's view}
  \label{sec:epsilon-algo}


  To synthesize a Dyck state graph from an introspective pushdown system, 
  it is built incrementally---node by node, edge by edge.
  The na\"ive fixed point algorithm presented earlier, if implemented literally,
  would (in the worst case) have to re-examine the entire DSG to add each edge.
  To avoid such re-examination, our implementation adds $\epsilon$-summary edges
  to the DSG.

  In short, an $\epsilon$-summary edge connects two control states if there
  exists a path between them with no net stack change---that is, all pushes are
  cancelled by corresponding pops.
  With $\epsilon$-summary edges available, any change to the graph can be
  propagated directly to where it has an effect, and then any new $\epsilon$-summary edges
  that propagation implies are added.


  Whereas the correspondence between CESK and an IPDS is
  relatively straightforward, the relationship between a DSG and its original IPDS
  is complicated by the fact that the IPDS keeps track of the
  \emph{whole} stack, whereas the DSG distributes (the same) stack information
  throughout its internal structure. 

  A classic reachability-based analysis for a pushdown system requires two
  mutually-dependent pieces of information in order to add another edge:

  \begin{enumerate}
  \item The topmost frame on a stack for a given control state $q$. This is
  essential for \emph{return} transitions, as this frame should be
  popped from the stack and the store and the environment of a caller
  should be updated respectively.

  \item Whether a given control state $q$ is reachable or not from the
  initial state $q_0$ along realizable sequences of stack actions.
  For example, a path from $q_0$ to $q$ along edges labeled ``push, pop, pop, push''
  is not realizable: the stack is empty after the first pop, so the second pop
  cannot happen---let alone the subsequent push.

  %

  \end{enumerate}

  These two data are enough for a classic pushdown reachability summarization
  to proceed
  one step further. 
  However, the presence of an abstract garbage
  collector, and the graduation to an \emph{introspective} pushdown system, imposes the requirement for a third item of data:

  \begin{enumerate}
  \item[3.] For a given control state $q$, what are \emph{all} possible frames
  that could happen to be \emph{on} the stack  at the moment the IPDS
  is in the state $q$?
  \end{enumerate}

  It is possible to recompute these frames from scratch in each iteration using the
NFA-extraction technique we described. 
But, it is easier to maintain per-node summaries, in the same spirit as
$\epsilon$-summary edges.


A version of the classic pushdown summarization algorithm that maintains 
the first two items is presented in~\cite{mattmight:Earl:2010:Pushdown}, 
so we will just outline the key differences here.

The crux of the algorithm is to maintain for each node $\qstate'$
in the DSG, a set of $\epsilon$-\emph{predecessors},
i.e., nodes $\qstate$, such that $\qstate
\RPDTrans_M^{\vec{\stackact}} \qstate'$ and
$[\vec{\stackact}] = \epsilon$. 
In fact, only two out of three kinds
of transitions can cause a change to the set of
$\epsilon$-predecessors for a particular node $\qstate$: an addition of
an $\epsilon$-edge or a pop edge to the DSG. 

It is easy to see why the second action might introduce new
$\epsilon$-paths  and, therefore, new $\epsilon$-predecessors.
Consider, for example, adding the ${\stackchar_-}$-edge $\qstate
\pdedge^{{\stackchar_-}} \qstate'$ into the following graph:
\begin{equation*}
  \xymatrix{ 
    \qstate_0 \ar[r]^{\stackchar_+} & \qstate & \qstate' \ar[r]^{\epsilon} & \qstate_1
  } 
\end{equation*}
As soon this edge drops in, there becomes an ``implicit''
$\epsilon$-edge between $\qstate_0$ and $\qstate_1$ because the net
stack change between them is empty; the resulting graph looks like:
\begin{equation*}
  \xymatrix{ 
    \qstate_0 \ar[r]^{\stackchar_+} \ar @{-->} @(u,u) [rrr]^\epsilon &
    \qstate \ar[r]^{\stackchar_-} &
    \qstate' \ar[r]^{\epsilon} &
    \qstate_1
  }  
\end{equation*}
where we have illustrated the implicit $\epsilon$-edge as a dashed
line.

A little reflection on $\epsilon$-predecessors
and top frames reveals a mutual dependency between
these items during the construction of a DSG. 
Informally:

\begin{itemize}
\item A \emph{top frame} for a state $q$ can be 
  pushed as a direct predecessor, or as
  a direct predecessor to an $\epsilon$-predecessor.

\item When a new $\epsilon$-edge $\qstate \xrightarrow{\epsilon}
  \qstate'$ is added, all $\epsilon$-predecessors of $\qstate$ become
  also $\epsilon$-predecessors of $\qstate'$. 
  That is, $\epsilon$-summary edges are transitive.

\item When a $\stackchar_-$-pop-edge $\qstate
  \xrightarrow{\stackchar_-} \qstate'$ is added, new
  $\epsilon$-predecessors of a state $\qstate_1$ can be obtained by
  checking if $\qstate'$ is an $\epsilon$-predecessor of $\qstate_1$
  and examining all existing $\epsilon$-predecessors of $\qstate$,
  such that $\stackchar_+$ is their possible top frame: this situation
  is similar to the one depicted in the example above. 

\end{itemize}

The third component---\emph{all} possible frames on the stack for a state
$\qstate$---is straightforward to compute with $\epsilon$-predecessors:
starting from $\qstate$, trace out only the edges which are labeled $\epsilon$
(summary or otherwise) or $\stackchar_+$.
The frame for any action $\stackchar_+$ in this trace is a possible stack
action.
Since these sets grow monotonically, it is easy to cache the results
of the trace, and in fact, propagate incremental changes to these
caches when new $\epsilon$-summary or $\stackchar_+$ nodes are
introduced. 
%
Our implementation directly reflects the optimizations discussed above.

\begin{figure*}
\centering

\renewcommand{\goodsingl}[1]{#1}

\begin{tabular}{|l|c|c|c||c|c|c||c|c|c||c|c|c||c|c|c|}
\hline
Program & 
$\syn{Exp}$ & $\syn{Var}$ & $k$ &
\multicolumn{3}{c||}{$k$-CFA} &
\multicolumn{3}{c||}{$k$-PDCFA} &
\multicolumn{3}{c||}{$k$-CFA $+$ GC} &
\multicolumn{3}{c|}{$k$-PDCFA $+$ GC} \\

\hline \hline


\multirow{2}{*}{\texttt{mj09}} & 
\multirow{2}{*}{19} & 
\multirow{2}{*}{8} &
0 &
83 & 107 & 4  &
38 & 38 & 4 &
36 & 39 & 4  &
33 & 32 & 4 \\
\cline{4-16}
&&&
1 &
454 & 812 & 1 &
44 & 48 & 1 &
34 & 35 & 1 &
32 & 31 & 1 \\
\hline


\multirow{2}{*}{\texttt{eta}} & 
\multirow{2}{*}{21} & 
\multirow{2}{*}{13} &
0 &
63 & 74 & 4 &
34 & 34 & 6 &
28 & 27 & 8 &
28 & 27 & 8 \\
\cline{4-16}
&&&
1 &
33 & 33 & 8 &
32 & 31 & 8 &
28 & 27 & 8 &
28 & 27 & 8 \\
\hline


\multirow{2}{*}{\texttt{kcfa2}} & 
\multirow{2}{*}{20} & 
\multirow{2}{*}{10} &
0 &
194 & 236  & 3 &
36 & 35 & 4 &
35 & 43 & 4 &
35 & 34 & 4 \\
\cline{4-16}
&&&
1 &
970 & 1935 & 1 &
87 & 144 & 2 &
35 & 34  & 2 &
35 & 34  & 2  \\
\hline


\multirow{2}{*}{\texttt{kcfa3}} & 
\multirow{2}{*}{25} & 
\multirow{2}{*}{13} &
0 &
272 & 327 & 4 &
58 & 63 & 5 &
53 & 52 & 5 &
53 & 52 & 5 \\
\cline{4-16}
&&&
1 &
$>$ 7119 & $>$  14201 & $\le$ 1 &
1761 & 4046 & 2  &
53 & 52 & 2 &
53 & 52 & 2 \\
\hline


\multirow{2}{*}{\texttt{blur}} & 
\multirow{2}{*}{40} & 
\multirow{2}{*}{20} &
0 &
$>$  1419 & $>$ 2435 & $\le$ 3  &
280 & 414 & 3 &
274 & 298 & 9 &
164 & 182 & 9 \\
\cline{4-16}
&&&
1 &
261 & 340 & 9 &
177 & 189 & 9 &
169 & 189 & 9 &
167 & 182 & 9 \\
\hline


\multirow{2}{*}{\texttt{loop2}} & 
\multirow{2}{*}{41} & 
\multirow{2}{*}{14} &
0 &
228 & 252 & 4 &
113 & 122 & 4 &
86 & 93 & 4 &
70 & 74  & 4  \\
\cline{4-16}
&&&
1 &
$>$ 10867 & $>$ 16040 & $\le$ 3 &
411 & 525 & 3  &
151 & 163 & 3  &
145 & 156 & 3 \\
\hline


\multirow{2}{*}{\texttt{sat}} & 
\multirow{2}{*}{63} & 
\multirow{2}{*}{31} &
0 &
$>$ 5362 & $>$ 7610 & $\le$ 6 &
775 & 979 & 6 &
1190 & 1567 & 6 &
321 & 384 & 6 \\
\cline{4-16}
&&&
1 &
$>$ 8395 & $>$ 12391 & $\le$ 6 &
7979 & 10299 & 6 &
982 & 1330 & 7 &
107 & 106 & 13 \\
\hline


\end{tabular}

\caption{Benchmark results. The first three columns provide the name of a
  benchmark, the number of expressions and variables in the program in
  the ANF, respectively. For each of eight combinations of
  pushdown analysis, $k \in \set{0, 1}$ and garbage collection
  on or off, the first two columns in a group show the number of
  \emph{control states} and transitions/DSG edges computed during the
  analysis (for both less is better). The third column presents the
  amount of \emph{singleton} variables, i.e, how many variables have a
  single lambda flow to them (more is better). Inequalities for some
  results denote the case when the analysis did not finish within 
  30 minutes. For such cases we can only report an upper bound
  of singleton variables as this number can only decrease. 
  %
  }
\label{fig:table-results}
\end{figure*}

\subsection{Experimental results}
\label{sec:experiments}

A fair comparison between different families of analyses 
should compare both precision and speed.
%
%
We have extended an existing
implementation of $k$-CFA to optionally enable pushdown analysis,
abstract garbage collection or both.  
Our implementation source and benchmarks are available:
\begin{center}
\url{http://github.com/ilyasergey/reachability}
\end{center}

As expected, the fused analysis does at least as well as the best of either
analysis alone in terms of singleton flow sets (a good metric for
program optimizability) and better than both in some cases.
Also worthy of note is the dramatic reduction in the size of the
abstract transition graph for the fused analysis---even on top of the already
large reductions achieved by abstract gabarge collection and pushdown flow
analysis individually.
The size of the abstract transition graph is a good heuristic measure of the temporal
reasoning ability of the analysis, \eg, its ability to support model-checking
of safety and liveness properties~\cite{mattmight:Might:2007:ModelChecking}.

In order to exercise both
well-known and newly-presented instances of CESK-based CFAs, we took a
series of small benchmarks exhibiting archetypal control-flow patterns
(see Figure~\ref{fig:table-results}).
Most benchmarks are taken from the CFA
literature: \texttt{mj09} is a running example from the work of
Midtgaard and Jensen designed to exhibit a non-trivial return-flow
behavior, \texttt{eta} and \texttt{blur} test common functional idioms,
mixing closures and eta-expansion, \texttt{kcfa2} and \texttt{kcfa3}
are two worst-case examples extracted from Van Horn and Mairson's proof of
$k$-CFA complexity~\cite{dvanhorn:VanHorn-Mairson:ICFP08},
\texttt{loop2} is an example from the Might's dissertation that
was used to demonstrate the impact of abstract GC~\cite[Section
  13.3]{mattmight:Might:2007:Dissertation}, \texttt{sat} is a
brute-force SAT-solver with backtracking.


\subsubsection{Comparing precision}
\label{sec:comparing-precision}

In terms of precision,
the fusion of pushdown analysis and abstract garbage collection
substantially cuts abstract transition graph sizes 
over one technique alone.

We also measure singleton flow sets as a heuristic metric for precision.
Singleton flow sets are a necessary precursor to optimizations
such as flow-driven inlining, type-check elimination and constant propagation.
Here again, the fused analysis prevails as the best-of- or better-than-both-worlds.

Running on the benchmarks, we have revalidated hypotheses about
the improvements to precision granted by both pushdown
analysis~\cite{dvanhorn:DBLP:conf/esop/VardoulakisS10} and abstract
garbage collection~\cite{mattmight:Might:2007:Dissertation}.
The table in Figure~\ref{fig:table-results} contains our detailed
results on the precision of the analysis. 


\subsubsection{Comparing speed}
\label{sec:comparing-speed}

In the original work on CFA2, Vardoulakis and Shivers present
experimental results with a remark that the running time of the
analysis is proportional to the size of the reachable
states~\cite[Section
6]{dvanhorn:DBLP:conf/esop/VardoulakisS10}. 
There is a similar correlation in the fused analysis,
but it is not as strong or as absolute.
From examination of the results, this appears to be because small graphs can have large stores inside each state,
which increases the cost of garbage collection (and thus transition) on a
per-state basis, and there is some additional per-transition overhead involved in maintaining the caches inside the Dyck state graph.
Table~\ref{fig:table-time} collects absolute execution
times for comparison.

\begin{figure}
\begin{center}
\begin{tabular}{|l|c@{\ \ }c|c@{\ \ }c|c@{\ \ }c|c@{\ \ }c|}
\hline
Program & 
\multicolumn{2}{c|}{$0$-CFA} &
\multicolumn{2}{c|}{$0$-PDCFA} &
\multicolumn{2}{c|}{$1$-CFA} &
\multicolumn{2}{c|}{$1$-PDCFA} \\

\hline \hline

\texttt{mj09} & 
$1''$ & $\epsilon$ &
$\epsilon$ & $1''$ &
$4''$ & $\epsilon$ &
$\epsilon$ & $\epsilon$ \\  
\hline

\texttt{eta} & 
$\epsilon$ & $\epsilon$ &
$\epsilon$ & $\epsilon$ &
$1''$ & $\epsilon$ &
$\epsilon$ & $\epsilon$ \\  
\hline

\texttt{kcfa2} & 
$1''$ & $\epsilon$ &
$\epsilon$ & $1''$ &
$24''$ & $\epsilon$ &
$1''$ & $\epsilon$ \\  
\hline

\texttt{kcfa3} & 
$2''$ & $\epsilon$ &
$\epsilon$ & $1''$ &
$\infty$ & $1''$ &
$58''$ & $2''$ \\  
\hline 

\texttt{blur} & 
$\infty$ & $7''$ &
$2'$ & $50''$ &
$4'$ & $30''$ &
$11''$ & $55''$ \\  
\hline

\texttt{loop2} & 
$36''$ & $1''$ &
$29''$ & $16''$ &
$\infty$ & $5''$ &
$13'$ & $2'$ \\  
\hline

\texttt{sat} & 
$\infty$ & $45''$ &
$6'$ & $19'$ &
$\infty$ & $3'$ &
$12'$ &  $37''$ \\  
\hline

\end{tabular}
\end{center}

\caption{
We ran our benchmark suite on a 2 Core 2.66 GHz OS~X machine with
4 Gb RAM. For each of the four analyses the left column denotes the
values obtained with no abstract collection, and the right one---with
GC on. The results of the analyses are presented in minutes ($'$) or
seconds ($''$), where $\epsilon$ means a value less than 1 second and
$\infty$ stands for an analysis, which has been interrupted due to the
an execution time greater than 30 minutes.
  }
\label{fig:table-time}
\end{figure}

\begin{figure*}
\centering
\includegraphics[scale=0.6]{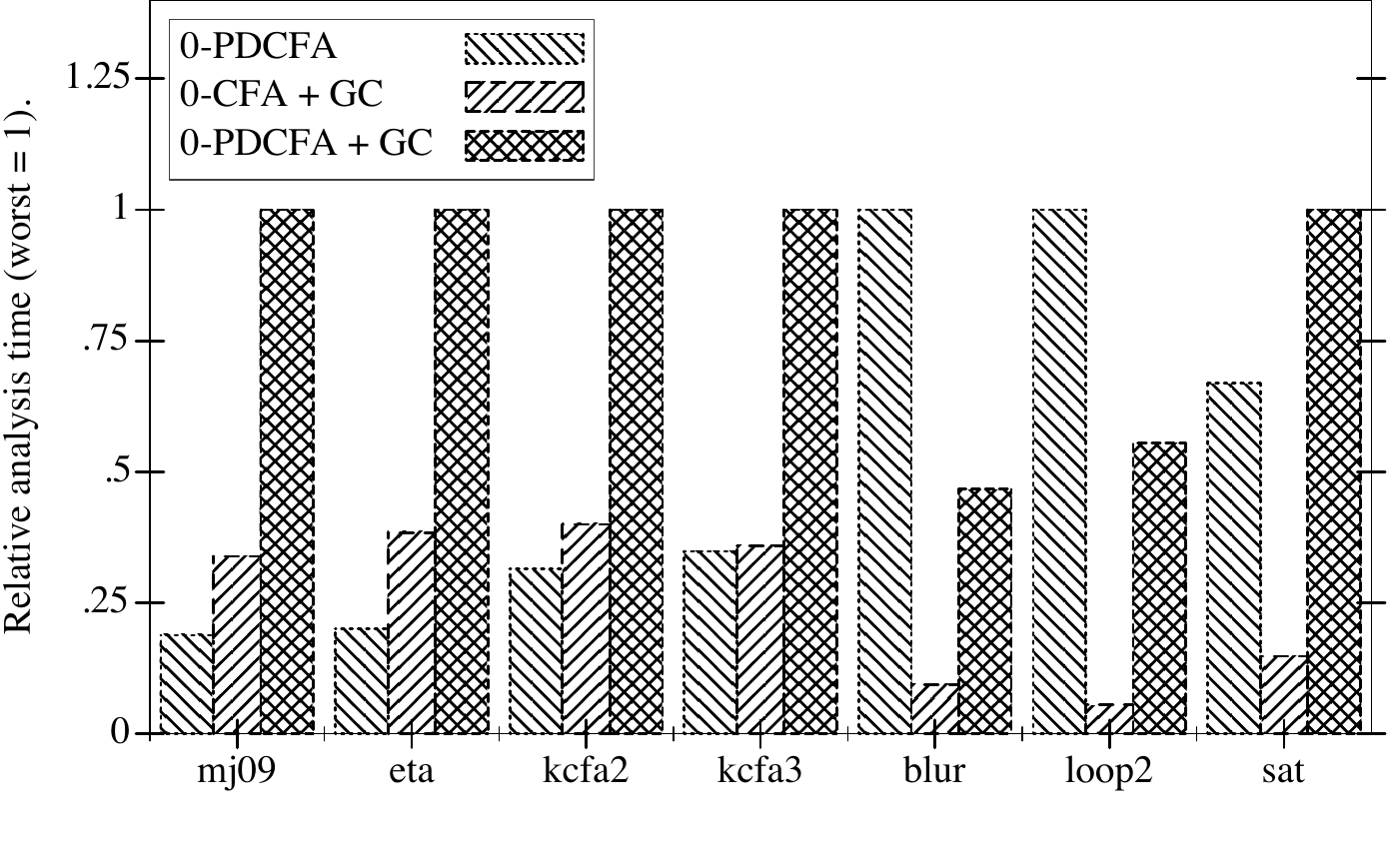}
\hfill
\includegraphics[scale=0.6]{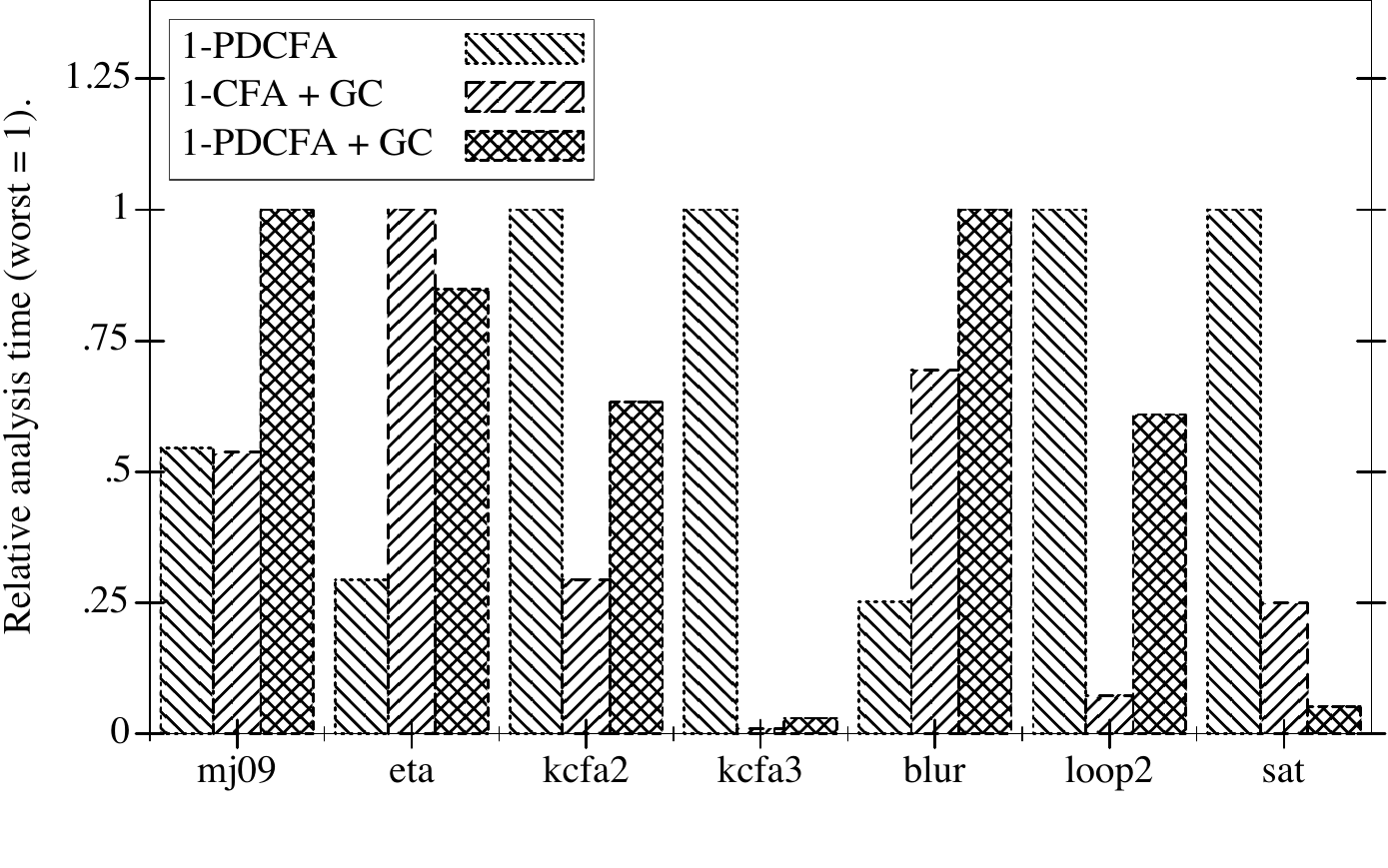}
\caption{Analysis times relative to worst (= 1) in class; smaller is
  better.  On the left is the monovariant 0CFA class of analyses, on
  the right is the polyvariant 1CFA class of analyses. (Non-GC
  $k$-CFA omitted.)}
\label{fig:execution-times}
\end{figure*}

It follows from the results that pure machine-style $k$-CFA is always
significantly worse in terms of execution time than either with GC or push-down
system. The histogram on Figure~\ref{fig:execution-times} presents normalized
relative times of analyses' executions.
About half the time, the fused analysis is faster than one of pushdown analysis
or abstract garbage collection.
And about a tenth of the time, it is faster than both.\footnote{
  The SAT-solving bechmark showed a dramatic improvement with the addition
  of context-sensitivity.
  Evaluation of the results showed that context-sensitivity provided enough 
  fuel to eliminate most of the non-determinism from the analysis.
}
When the fused analysis is slower than both, it is generally not much worse than
twice as slow as the next slowest analysis.

Given the already substantial reductions in analysis times provided by
collection and pushdown anlysis, the amortized penalty is a small and
acceptable price to pay for improvements to precision.


\section{Related work}
\label{sec:related}

Garbage-collecting pushdown control-flow analysis draws on work in higher-order
control-flow analysis~\cite{mattmight:Shivers:1991:CFA}, abstract
machines~\cite{mattmight:Felleisen:1987:CESK} and abstract
interpretation~\cite{mattmight:Cousot:1977:AI}.

%
%
%
%
%
%

\paragraph{Context-free analysis of higher-order programs}
The motivating work
for our own is Vardoulakis and Shivers very recent discovery of
  CFA2~\cite{dvanhorn:DBLP:conf/esop/VardoulakisS10}.
CFA2 is a table-driven summarization algorithm that exploits the balanced
nature of calls and returns to improve return-flow precision in a control-flow
analysis.
Though CFA2 exploits context-free languages, context-free languages are not
explicit in its formulation in the same way that pushdown systems are explicit in
our presentation of pushdown flow analysis.
With respect to CFA2, our pushdown flow analysis is also polyvariant/context-sensitive (whereas CFA2 is monovariant/context-insensitive), and it
covers direct-style.

On the other hand, CFA2 distinguishes stack-allocated and
store-allocated variable bindings, whereas our formulation of pushdown
control-flow analysis does not: it allocates all bindings in the
store.
If CFA2 determines a binding can be allocated on the stack, that
binding will enjoy added precision during the analysis and is not
subject to merging like store-allocated bindings.
While we could incorporate such a feature in our formulation,
it is not necessary for achieving ``pushdownness,''
and in fact, it could be added to classical finite-state CFAs as well.

\paragraph{Calculation approach to abstract interpretation}

Midtgaard and Jensen~\cite{dvanhorn:Midtgaard2009Controlflow} 
systematically calculate
0CFA using the Cousot-Cousot-style calculational approach to abstract
interpretation~\cite{dvanhorn:Cousot98-5} applied to an ANF
$\lambda$-calculus.
Like the present work, Midtgaard and Jensen start with the CESK
machine of Flanagan~\emph{et al.}~\cite{mattmight:Flanagan:1993:ANF} and employ a
reachable-states model. 

The analysis is then constructed by composing well-known
Galois connections to reveal a 0CFA incorporating reachability.
The abstract semantics approximate the control stack component of the
machine by its top element.
The authors remark monomorphism materializes in two mappings: ``one
mapping all bindings to the same variable,'' the other ``merging all
calling contexts of the same function.''
Essentially, the pushdown 0CFA of Section~\ref{sec:abstraction}
corresponds to Midtgaard and Jensen's analysis when the latter
mapping is omitted and the stack component of the machine is not
abstracted.

\paragraph{CFL- and pushdown-reachability techniques}
This work also draws on CFL- and pushdown-reachability
analysis~\cite{mattmight:Bouajjani:1997:PDA-Reachability,mattmight:Kodumal:2004:CFL,mattmight:Reps:1998:CFL,mattmight:Reps:2005:Weighted-PDA}.
For instance, \ecg s, or equivalent variants thereof, appear in many
context-free-language and pushdown reachability algorithms.
For our analysis, we implicitly invoked
these methods as subroutines.
When we found these algorithms lacking (as with their enumeration of
control states), we developed Dyck state graph construction.

CFL-reachability techniques have also been used to compute classical
finite-state abstraction CFAs~\cite{mattmight:Melski:2000:CFL} and
type-based polymorphic control-flow
analysis~\cite{mattmight:Rehof:2001:TypeBased}.
These analyses should not be confused with pushdown control-flow
analysis, which is computing a fundamentally more precise kind of CFA.
Moreover, Rehof and Fahndrich's method is cubic in the size of the
\emph{typed} program, but the types may be exponential in the size of
the program.
Finally, our technique is not restricted to typed programs.

\paragraph{Model-checking higher-order recursion schemes}
There is terminology overlap with work by
Kobayashi~\cite{mattmight:Kobayashi:2009:HORS} on model-checking higher-order
programs with higher-order recursion schemes, which are a
generalization of context-free grammars in which productions can take
higher-order arguments, so that an order-0 scheme is a context-free
grammar.
Kobyashi exploits a result by Ong~\cite{dvanhorn:Ong2006ModelChecking} which
shows that model-checking these recursion schemes is decidable (but
ELEMENTARY-complete) by transforming higher-order programs into
higher-order recursion schemes.

Given the generality of model-checking, Kobayashi's technique may be
considered an alternate paradigm for the analysis of
higher-order programs.
For the case of order-0, both Kobayashi's technique and our own
involve context-free languages, though ours is for control-flow
analysis and his is for model-checking with respect to a temporal
logic.
After these surface similarities, the techniques diverge.
In particular, higher-order recursions schemes are limited
to model-checking programs in the simply-typed 
lambda-calculus with recursion.







\section{Conclusion}

Our motivation was to further probe the limits of decidability
for pushdown flow analysis of higher-order programs
by enriching it with abstract garbage collection.
We found that abstract garbage collection broke
the pushdown model, but not irreparably so.
By casting abstract garbage collection in terms of
an introspective pushdown system and synthesizing
a new control-state reachability algorithm, we have 
demonstrated the decidability of fusing two
powerful analytic techniques.

As a byproduct of our formulation, it was also easy
to demonstrate how polyvariant/context-sensitive 
flow analyses generalize to a pushdown formulation,
and we lifted the need to transform to continuation-passing style
in order to perform pushdown analysis.

Our empirical evaluation is highly encouraging: it shows that the fused
analysis provides further large reductions in the size of the abstract
transition graph---a key metric for interprocedural control-flow precision.
And, in terms of singleton flow sets---a heuristic metric for optimizability---the fused 
analysis proves to be a ``better-than-both-worlds'' combination.

Thus, we provide a sound, precise and polyvariant introspective
pushdown analysis for higher-order programs.

\acks
%
We thank our anonymous reviewers for their detailed comments on the submitted
paper. 
This material is based on research sponsored by DARPA under the programs 
Automated Program Analysis for Cybersecurity (FA8750-12-2-0106) and 
Clean-Slate Resilient Adaptive Hosts (CRASH). 
The U.S. Government is authorized to reproduce and
distribute reprints for Governmental purposes notwithstanding any copyright
notation thereon.


\bibliographystyle{acm}
\bibliography{mattmight,dvanhorn,local}




\end{document}